\newcommand\setcitation[2]{%
	\csdef{mycommoncitation\text_uppercase:n{#1}}{#2}%
	\csappto{bbAllCommonCitations}{\cite{#2}\ }
}
\newcommand\getcitation[1]{%
	\csuse{mycommoncitation\text_uppercase:n{#1}}}
\newcommand\refto[1]{%
	\ifcsname  mycommoncitation\text_uppercase:n{#1}\endcsname%
	\getcitation{#1}%
	\else%
	#1%
	\fi%
}
\newcommand\mycite[1]{\cite{\refto{#1}}}
\newcommand\olnot[1]{\overline{#1}}
\newcommand\ltrue{\mathbf{t}}
\newcommand\lfalse{\mathbf{f}}
\newcommand{\ignore}[1]{}
\newcommand{\namedcomment}[3]{%
	\ifthenelse{\boolean{nocomments}}%
	{\relax}
	{
		\ifthenelse{\boolean{comments-in-margin}}%
		{ \marginpar{\color{#3}{\sc #2} #1}}  
		{  {\color{#3} \textsc{ #2}: #1}  } 
	}%
}
\newcommand{\bart}[1]{\namedcomment{#1}{bb}{Maroon}}
\newcommand{\eg}{e.g., }
\newcommand{\ie}{i.e., }
\providecommand\m[1]{\ensuremath{#1}\xspace}
\renewcommand{\m}[1]{\ensuremath{#1}\xspace}
\newcommand{\trval}[1]{\m{\mathbf{#1}}}
	\newcommand{\lrule}{\leftarrow}
	\newcommand{\cause}{\stackrel{c}{\lrule}}
	\newcommand{\ltrue}{\trval{t}}
	\newcommand{\lfalse}{\trval{f}}
	\newcommand{\struct}{\m{I}}
	\newcommand{\set}{\m{S}}
	\NewDocumentCommand\inter{g+g}{%
	  \IfNoValueTF{#1}
	    {\struct}
	    {\m{#1^{#2}}}}
	\renewcommand{\int}{\m{\mathbb{Z}}}
	\NewDocumentCommand\subs{g+g}{%
	  \IfNoValueTF{#1}
	    {\m{/}}
	    {\m{#1/ #2}}}
	\newcommand{\tuple}[1]{\m{\left \langle #1 \right \rangle }}
\newcommand{\ouracronym}[3]{%
	\newacronym{#1}{#2}{#3}
	\expandafter\newcommand\csname #1\endcsname{\gls{#1}\xspace}%
}
	\def\ifenv#1{
	\def\@tempa{#1}%
	\def\@ttempa{#1*}%
	\ifx\@tempa\@currenvir
	\expandafter\@firstoftwo
	\else
	\expandafter\@secondoftwo
	\fi
	}
	\newcommand{\ddrule}[4]{\ensuremath{#1 \leftarrow #2 & \{#3\} & #4}}
	\newcommand{\drule}[2]{\ensuremath{#1 & \leftarrow & #2}}
	\newcommand{\darule}[4]{\ensuremath{#1 \leftarrow #2 & \{#3\} & #4}}
	\newcommand{\arule}[2]{\ensuremath{#1 \, &\leftarrow \, #2}}
	\newcommand{\LNDRule}[2]{
	\ifenv{array}
	{\drule{#1}{#2}}
	{ \ifenv{align}
		{\arule{#1}{#2}}
		{\ifenv{align*}
		{\arule{#1}{#2}}
		{ERROR: using LDRule in unsupported environment: \@currenvir}
		}
	}
	}
	\newcommand{\LDRule}[4]{
	\ifenv{array}
	{\ddrule{#1}{#2}{#3}{#4}}
	{ \ifenv{align}
		{\darule{#1}{#2}{#3}{#4}}
		{\ifenv{align*}
		{\darule{#1}{#2}{#3}{#4}}
		{ERROR: using LDRule in unsupported environment: \@currenvir}
		}
	}
	}
	\NewDocumentCommand\LRule{m+g+g+g}{%
		\IfNoValueTF{#2}%
		{#1.&}{%
		\IfNoValueTF{#3}
		{\LNDRule{#1}{#2.}}
		{\LDRule{#1}{#2.}{#3}{#4}}%
		}
	}
	\NewDocumentCommand\CLRule{m+g}{%
	\ifenv{array}
	{\cdrule{#1}{#2}}
	{ \ifenv{align}
		{\carule{#1}{#2}}
		{\ifenv{align*}
			{\carule{#1}{#2}}
			{ERROR: using CLRule in unsupported environment: \@currenvir}
		}
	}
	}
	\NewDocumentCommand\carule{m+g}{%
		\IfNoValueTF{#2}
			{\ensuremath{#1.}}
			{\ensuremath{#1 \, &\cause \, #2}}}
	\NewDocumentCommand\cdrule{m+g}{%
		\IfNoValueTF{#2}
			{\ensuremath{#1.}}
			{\ensuremath{#1 & \cause & #2}}}
	\newcommand{\algrule}[4]{
	\hbox{{#1}:}& 
	\quad #2 ~\longrightarrow~ #3 
	\hbox{~ if } #4\\
	}
	\newcommand{\AlgoRule}[4]{
	\ifenv{array}
	{\algrule{#1}{#2}{#3}{#4}}
		{ERROR: using AlgoRule in unsupported environment: \@currenvir}
	}
\newcommand\doublBlind[1]{}
\begin{document}
\title{Incremental SAT-Based Enumeration of Solutions to the Yang-Baxter Equation}

\titlerunning{SAT-Based Enumeration of Solutions to the YBE}

\author{Daimy Van Caudenberg\inst{1}\orcidID{0000-0002-7975-4838}\and
Bart Bogaerts\inst{1,2}\orcidID{0000-0003-3460-4251} \and
Leandro Vendramin\inst{3}\orcidID{0000-0003-0954-7785}}

\authorrunning{D. Van Caudenberg et al.}

\institute{KU Leuven, Dept.\ of Computer Science; Leuven.AI, 
	B-3000 Leuven, Belgium
	\\\email{Daimy.VanCaudenberg,Bart.Bogaerts@kuleuven.be}\\\and
Vrije Universiteit Brussel,  Dept.\ of Computer Science, B-1050 Brussels, Belgium \\
\and Vrije Universiteit Brussel,  Dept.\ of  Mathematics and Data Science,  B-1050 Brussels, Belgium\\
\email{Leandro.Vendramin@vub.be}}

\maketitle

\begin{abstract}
  We tackle the problem of enumerating set-theoretic solutions to the Yang-Baxter equation.
	This equation originates from statistical and quantum mechanics, but also has applications in knot theory, cryptography, quantum computation and group theory.
	Non-degenerate, involutive solutions have been enumerated for sets up to size 10 using constraint programming with partial static symmetry breaking~\cite{AMV22Enumerationset-theoreticsolutionsYang-Baxterequation}; for general non-involutive solutions, a similar approach was used to enumerate solutions for sets up to size 8.
	In this paper, we use and extend the SAT Modulo Symmetries framework (SMS), to expand the boundaries for which solutions are known.
	The SMS framework relies on a \emph{minimality check}; we present two solutions to this, one that stays close to the original one designed for enumerating graphs and a new incremental, SAT-based approach.
	With our new method, we can reproduce previously known results much faster and also report on results for sizes that have remained out of reach so far.
	This is an extended version of a paper to appear in the proceedings of the 31st International Conference on Tools and Algorithms for the Construction and Analysis of Systems.
\keywords{satisfiability\and
symmetries\and
Yang-Baxter equation\and
enumeration}
\end{abstract}

\renewcommand\bart[1]{\relax}

\section{Introduction}
\label{sec:introduction}

\subsubsection{The Yang-Baxter Equation}
The Yang--Baxter equation (YBE) is a fundamental equation in mathematical physics,
with origins in the study of statistical mechanics. Its
significance, however, has quickly extended to various areas of pure mathematics, including
representation theory and low-dimensional topology. More recently, a discrete
version of the equation has become central in algebra and combinatorics.
The set-theoretic solutions to this discrete
version of the equation were first highlighted by Drinfeld \cite{zbMATH04091684},
who emphasized the importance of discovering new solutions and the inherent
difficulty of this task. He suggested that set-theoretical solutions, in
particular, merited further investigation. 
The study and construction of such solutions
began with
a series of pioneering papers \cite{zbMATH01214144,zbMATH01425260,zbMATH01501663,zbMATH01585085}.
These works apply tools from ring theory, group theory and homological algebra 
to analyze non-degenerate solutions, as these solutions naturally have
groups acting on them. Moreover, these papers demonstrated that these 
solutions offer a particularly rich source of examples, with numerous
applications in algebra \cite{zbMATH06289322,zbMATH06713497,zbMATH05118810} 
and connections with other topics, 
such as Hopf--Galois structures \cite{zbMATH07453873}.

Constructing finite solutions of small size is particularly intriguing for
several reasons. Firstly, it presents a highly challenging problem at the
intersection of computational techniques and algebra. Secondly, 
solutions provide, for example, concrete colouring invariants of knots \cite{zbMATH05998842} and
intriguing examples of algebraic structures to study \cite{arXiv:2305.06023}. Thirdly,
examples of solutions allow for experimentation, which can reveal patterns that
provide deeper insights into the structure of combinatorial solutions.
Lastly, the ubiquity of the Yang--Baxter equation suggests that the methods used to construct its solutions could be adapted to other contexts, even those not directly related to the equation itself. 
For example, these ideas can be applied to the problem of constructing $L$-algebras \cite{zbMATH05365168}, something that has direct applications in algebraic logic.

\subsubsection{SAT Modulo Symmetries} 
As usual in mathematics, one is not interested in generating \emph{all} solutions, but only all non-isomorphic solutions.
One prominent technique for isomorph-free enumeration is SAT Modulo Symmetries (SMS)~\mycite{SMS}.
Given a formula in conjunctive normal form (CNF) and a possibly empty set of symmetries, the goal of SMS is to find all satisfying assignments that are lexicographically minimal among their symmetric counterparts (here, symmetries of the CNF correspond to isomorphisms of the original problem).
To do this, SMS uses a minimality check that takes place \emph{during} the solving phase.
In other words, given a (possibly) partial assignment, it is verified whether that assignment can still be extended to a complete, lexicographically minimal assignment.

When the underlying problem is known, the minimality check can take into account that not all complete assignments are considered, but only the assignments that model the formula.
Hence, the minimality check can be optimized using domain-specific knowledge.
This has been used for enumerating graphs~\mycite{SMS}, matroids~\cite{KSS22SATAttackRotasBasisConjecture} and so-called \emph{Kochen-Specker vector systems}~\cite{KPS23Co-CertificateLearningSATModuloSymmetries}.

\subsubsection{Our contributions}
Inspired by the success of SMS, we now extend it to enumerate non-isomorphic solutions to the Yang-Baxter equation.
First, we discuss how to encode the YBE in CNF.
The most challenging aspect is then to develop a domain-specific minimality check, taking into account the specific problem structure.
We can then use this minimality check to learn clauses that are added \emph{during} the solving phase to eliminate parts of the search tree that contain no lexicographically minimal solutions.

We describe and test two implementations of this minimality check. 
The first is a backtracking approach that stays close to the original SMS work, also making use of clever representations of sets of symmetries.
The second one is a new approach based on incremental SAT solving where the \mbox{(non-)minimality} problem \emph{itself} is encoded as a CNF and a SAT oracle is repeatedly used to search for symmetries that guarantee that the solution at hand is non-minimal. 
Both approaches result in quite significant speed-ups compared to the previous approach~\cite{AMV22Enumerationset-theoreticsolutionsYang-Baxterequation}; experiments show that the incremental, SAT-based minimality check outperforms the backtracking approach.
Moreover, with this new approach, we can report on results for sizes that have remained out of reach so far.

\section{Preliminaries}
\label{sec:prelims}

\subsubsection{Yang-Baxter Equation and Cycle Sets}

The \emph{Yang-Baxter equation} in general is concerned with vector spaces and
mappings between them. In this paper, we focus on set-theoretic solutions
to this equation, as introduced by Drinfeld~\cite{D92someunsolvedproblemsquantumgrouptheory}, and particularly on
involutive non-degenerate solutions. 
These objects are known to be equivalent to \emph{non-degenerate cycle sets}~\cite{MR2132760}, which form a much simpler class of combinatorial objects.
We omit the definition of the
Yang-Baxter equation in full generality and immediately define cycle sets; we
refer the reader for example to Akg\"un, Mereb and Vendramin~\cite{AMV22Enumerationset-theoreticsolutionsYang-Baxterequation} for more details. 

\newcommand\binop{\cdot}
\newcommand\leftmult{\phi} 
\newcommand\set{X}
\begin{definition}\label{def:cycle_set}
	A cycle set $(\set,\binop)$ is a pair consisting of a non-empty set $\set$ and a binary operation~$\binop$ on $\set$ that fulfills the following relations:
	\begin{enumerate}
		\item the map $\leftmult_x:\set\to\set:y\mapsto x\binop y$ is bijective for all $x\in\set$ and
		\item for all $x,y,z\in\set$: 
		\begin{equation*}\label{eq:cycloid}
			(x\binop y)\binop(x\binop z)=(y\binop x)\binop(y\binop z). \text{\qquad(the cycloid equation)}
		\end{equation*}
	\end{enumerate}
It is called \emph{non-degenerate} if the map $\set\to\set:x\mapsto x\binop x$ is bijective. 
\end{definition}

The rest of this paper is concerned with computer-aided enumeration of all non-degenerate cycle sets of a given (finite) size. 

\newcommand\assignment{\alpha} 
\newcommand\formula{F} 

\subsubsection{Boolean Satisfiability (SAT)}
A \emph{(Boolean) variable} takes values in $\{\ltrue,\lfalse\}$; a \emph{literal} is a variable $x$ or its negation $\olnot x$. 
A \emph{clause} is a disjunction of literals and a \emph{formula} (in conjunctive normal form) is a conjunction of clauses. 
A \emph{(partial) assignment} is a consistent set of literals (i.e., a set of literals that does not contain a literal and its negation). 
An assignment is \emph{complete} (for formula $\formula$) if it contains either $x$ or $\olnot x$ for each variable $x$ occurring in $F$. 
An assignment $\assignment$ \emph{satisfies} a formula $\formula$ if it contains at least one literal from each clause in $F$. 
The \emph{SAT problem} consists of deciding for a given formula $\formula$ whether an assignment exists that satisfies it.

\subsubsection{Symmetry in SAT} Symmetry has a long history in SAT, with various tools being used to exploit them either before search (\eg \cite{\refto{shatter},DBBD16ImprovedStaticSymmetryBreakingSAT,ABR24SatsumaStructure-BasedSymmetryBreakingSAT}) or during the search  (\eg \cite{S09SymChaffexploitingsymmetrystructure-awaresatisfiabilitysolver,BNOS10EnhancingClauseLearningSymmetrySATSolvers,DBB17SymmetricExplanationLearningEffectiveDynamicSymmetry,\refto{SMS}}). 
The \emph{SAT Modulo Symmetries (SMS)} framework \mycite{SMS}  stands out in this list by going a step further than just deciding whether a formula is satisfiable; instead, the goal is to enumerate assignments that satisfy it. 
It was designed with use cases in mathematics in mind and in particular it was first used to enumerate graphs that have certain interesting properties. 
In such use cases, one is often not interested in \emph{all} graphs that satisfy these properties, but only in generating non-isomorphic graphs. 
The core idea underlying SMS is that we can \emph{(1)} encode as a propositional formula what it means to be a graph that satisfies these interesting properties (or more generally, a suitable mathematical structure) and \emph{(2)} force a SAT solver, during the search, to generate \emph{canonical representations} of each of the classes of isomorphic solutions. 
The second point is achieved by designing a procedure that is aware of the isomorphisms of the problem at hand, known as the \emph{minimality check}. 
This takes the state of the SAT solver (a partial interpretation) as input and checks whether it can still be extended to a complete assignment that represents a lexicographically minimal graph (among all graphs isomorphic to it). If not, it forces the solver to abort the current branch of the search tree by analyzing \emph{why} this is no longer possible and learning a new clause from it that is then added to the solver's working formula. 
In general, this minimality check is incomplete but guaranteed to be complete when run on complete assignments.
This minimality check needs to be designed for each application, taking into account the (encoding of) the mathematical problem at hand as well as the structure of the set of isomorphisms. 
Seress~\cite{alma9992322369701471} gives a detailed introduction on group theory, for a lighter intoduction we refer readers to Gent et al.~\cite{GPP06SymmetryConstraintProgramming}.

\section{SAT Modulo Symmetries for the Yang-Baxter Equation}

We now explain how we use the SMS framework to enumerate cycle sets of a given size. 
To do this we first discuss how to construct a propositional formula that encodes the properties of cycle sets.
Next, we discuss the isomorphisms of the problem, and how the minimality check is adapted to deal with these isomorphisms.

\newcommand\ybemat{C\xspace}
\newcommand\matvar{v\xspace}
\newcommand\lvar{l\xspace}
\newcommand\rvar{r\xspace}
\newcommand\yvar{y\xspace}

\subsection{SAT Encoding} 
\label{ss:satenc}
\label{enc}
A cycle set of size $n$ can be viewed as an $n\times n$ matrix taking at each entry a number between $1$ and $n$ and satisfying some structural properties. 
This matrix represents the binary operation of the cycle set: at position $i,j$, the value is $i\cdot j$. 
In our algorithms, we will often take the view of a cycle set being such an integer-valued matrix, while the actual SAT encoding, obviously, talks about lower-level variables. 
The properties our matrix $\ybemat\in\set^{n\times n}$ should satisfy are: 
\begin{enumerate}
 \item \label{prop:bij} for all $x,y,z\in\set$ with $y\neq z$, $\ybemat_{x,y}\neq\ybemat_{x,z}$ ($\leftmult_x$ is bijective for all $x\in\set$),
 \item \label{prop:cycloid} for all $x,y,z\in\set$, $\ybemat_{\ybemat_{x,y},\ybemat_{x,z}}=\ybemat_{\ybemat_{y,x},\ybemat_{y,z}}$ (the cycloid equation holds), and
 \item \label{prop:nondeg} for all $x,y\in\set$ with $x\neq y$, $\ybemat_{x,x}\neq\ybemat_{y,y}$ (the cycle set is non-degenerate).
\end{enumerate}

To encode these properties in CNF, we will make use of the one-hot encoding, \ie for each $i,j,k\in\set$ we introduce a Boolean variable $\matvar_{i,j,k}$ which is true if and only if $\ybemat_{i,j}=k$.
We then add clauses stating that for each cell, exactly one of its indicator variables must hold;
\begin{align}
	&\texttt{ExactlyOne}(\{ \matvar_{i,j,k}\mid k\in\set \}), &&\text{(for each $i,j\in \set$)}
\end{align}
where $\texttt{ExactlyOne}$ refers to a set of clauses that may use auxiliary variables and that enforce that exactly one of its input variables holds.  
These clauses are constructed using either a binary encoding or the commander encoding \cite{KK07EfficientCNFEncodingSelecting1from}.
We then encode the properties in clauses over these variables, ensuring that each number should occur exactly once on each row and the diagonal.
Last, we also ensure that the cycloid equation holds. 
To do this, for each $i,j,k,b\in X$, we introduce a new variable $\yvar_{i,j,k,b}$ that is true precisely when $\ybemat_{\ybemat_{i,j},\ybemat_{i,k}} = b$ and that is also true precisely when $\ybemat_{\ybemat_{j,i},\ybemat_{j,k}} = b$.
The complete encoding can be found in Appendix~\ref{app:propCNF}.

\newcommand\symmgroup{\mathcal{S}\xspace}
\subsubsection{Fixing the Diagonal}
\label{ss:diagfix}
The problem defined in the previous section can be solved as is, however, the search space can be significantly reduced using a trick introduced by Akg\"un, Mereb and Vendramin~\cite{AMV22Enumerationset-theoreticsolutionsYang-Baxterequation}.
In short, if the diagonals of two cycle sets are conjugates, we know that the cycle sets are isomorphic.
If the goal is to only enumerate \emph{non-isomorphic} cycle sets, we can safely partition the problem by enumerating the solutions for one diagonal per conjugacy class.
Given a cycle set $(\set,\binop)$, its diagonal can be expressed as a permutation of the elements in $\set=\{1,\ldots,n\}$.
In other words, the set of possible diagonals is given by the symmetric group over $\set$, i.e., $\symmgroup_n$.
It is known that $\symmgroup_n$ has $p(n)$ conjugacy classes, where $p(n)$ is the number of integer partitions of $n$.
Hence, the search space can be drastically reduced by partitioning the problem into $p(n)$ problems with the diagonal fixed to a representative of its conjugacy class.

Given a fixed diagonal (i.e., values for all matrix cells $\ybemat_{i,i}$), we fix these variables in our encoding by only introducing variables $\matvar_{i,j,k}$ whenever if $j\neq i$ and $k\neq \ybemat_{i,i}$ and simplifying the rest of the theory accordingly. 
Fixing the diagonal does not only allow us to simplify the encoding, later we will show that it also allows us to optimize the minimality check.
\newcommand\obinop{\times}
\newcommand\oset{Y}
\newcommand\perm{\pi}
\newcommand\invperm{\pi^{-1}}
\newcommand\centr{C}
\newcommand\leqC{\preceq}
\newcommand\lneqC{\prec}
\newcommand\leqP{\trianglelefteq}
\newcommand\lneqP{\vartriangleleft}
\newcommand\leqCc{\leq_{\cycsets}}
\newcommand\lneqCc{<_{\cycsets}}
\newcommand\leqPc{\leqP}
\newcommand\lneqPc{\lneqP}
\newcommand\gneqP{\succ_{\pcycsets}}
\newcommand\geqP{\succeq_{\pcycsets}}
\newcommand\tuple[1]{\langle#1\rangle}
\newcommand\pcycsets{\mathcal{P}_n}
\newcommand\cycsets{\mathcal{C}_n}

\subsection{Isomorphisms and Symmetries}
Two cycle sets $(\set,\binop)$ and $(\set,\obinop)$ are isomorphic if and only if there exists an isomorphism $\perm:\set\to\set$ such that $\perm(x\binop y)=\perm(x)\obinop\perm(y)$.
For the matrices $\ybemat$ and $\ybemat'$ corresponding to $(\set,\binop)$ and $(\set,\obinop)$ respectively, this means that $\ybemat'=\perm(\ybemat)$ where $\perm(\ybemat)_{i,j}=\invperm(\ybemat_{\perm(i),\perm(j)})$.
When enumerating all cycle sets of a fixed size, we are only interested in enumerating non-isomorphic solutions;  
we choose to enumerate cycle sets that are lexicographically minimal (among all isomorphic solutions).
In other words, we are only interested in those cycle sets $(\set,\binop)$, whose associated matrices are lexicographically smaller than or equal to those of their isomorphic variants.
Concretely, given a cycle set $(\set,\binop)$, with associated matrix $\ybemat$, we need the following to hold:
\[\forall\perm\in\mathcal{I}:\ybemat\leqC\perm(\ybemat),\]
where $\leqC$ is the lexicographical ordering given by $\ybemat\leqC\ybemat'$ if $C=C'$ or  $C_c < C'_c$ for the 
first\footnote{Given a fixed order of the cells.} cell $c$ where $C$ and $C'$ differ,  
and $\mathcal{I}$ is the set of isomorphisms of the problem.
However, it is important to note that these isomorphisms may not correspond to symmetries of the propositional formula.
Indeed, for our specific use case, the commander encoding of the $\texttt{ExactlyOne}$ constraint introduces auxiliary variables that break some of the symmetries. 
As a consequence, symmetries cannot be detected from the CNF encoding but should really be determined by the problem at hand. 

In the general case, all isomorphisms $\perm\in\symmgroup_{n}$ with $n=|\set|$ need to be considered. However, when working with a fixed diagonal, not all $\perm\in\symmgroup_n$ are isomorphisms of the problem, but only the permutations that fix the diagonal.
Specifically, the isomorphisms of a problem with fixed diagonal $T$ consist of the set $\centr_{\symmgroup_n}(T)$, i.e., the centralizer of $T$ in the symmetric group.
These are exactly the elements $\perm\in\symmgroup_n$ for which it holds that $\perm\circ T=T\circ\perm$.

\subsection{Minimality Check}
We now discuss the core of the SAT Modulo Symmetries framework, namely the minimality check, the goal of which is deciding whether the current 
assignment (which can be thought of as a partially constructed matrix) can still be extended to a lexicographically minimal matrix (among its symmetric images).

\newcommand\pybemat{P}
\newcommand\asg{\assignment}
\newcommand\xcycsets[1]{\mathcal{X}(#1)}

\subsubsection{Partial Cycle Sets}
A \emph{partial cycle set} is a matrix $\pybemat\in{(2^\set)}^{n\times n}$ with $n=|\set|$, where each cell $c\in\set\times\set$ of the matrix represents a non-empty \emph{domain} $\pybemat_c\subseteq X$ of values that are still possible.
Given a (partial) assignment $\asg$, we can extract a partial cycle set $\pybemat^\asg$ as follows:
\[\forall c\in\set\times\set:\pybemat^\asg_c=\{x\in\set\mid\olnot\matvar_{c,x}\not\in\asg\},\]
where if $c=\tuple{i,j}$, we denote $\matvar_{i,j,x}$ as $\matvar_{c,x}$.
In other words, the partial cycle set will contain only values that can still be true according to the assignment. 

Given a partial cycle set $\pybemat$ and cell $c$, we say that the entry $\pybemat_c$ is \emph{defined} if it equals a singleton $\{x\}$, and write $\pybemat_c=x$.
In all other cases, we say that $\pybemat_c$ is \emph{undefined}.
A complete cycle set $\ybemat$ is a partial cycle set with no undefined values and for which the cycle set constraints discussed in \cref{ss:satenc} hold.
In this case, we associate $\ybemat$ with an actual cycle set.
The set $\pcycsets$ denotes all partial cycle sets, similarly, the set of all complete cycle sets is denoted $\cycsets$.
A partial cycle set $\pybemat\in\pcycsets$ can be extended to another partial (or complete) cycle set $\pybemat'\in\pcycsets$, if for all cells $c$ it holds that $\pybemat'_c\subseteq\pybemat_c$.
Given a partial cycle set $\pybemat\in\pcycsets$, we denote the set of complete cycle sets that $\pybemat$ can be extended to by $\xcycsets{\pybemat}$.
\begin{example}
Given a partial cycle set
\[\pybemat=
\left[\begin{matrix}
$\{2\}$             & $\{1\}$           &  $\{3\}$ \\
$\{2\}$           & $\{1\}$           &  $\{3\}$ \\
$\{1,2\}$         & $\{1,2\}$         &  $\{3\}$
\end{matrix}
\right], \text{ we have that }
\xcycsets{\pybemat}=
\left\{
\left[\begin{matrix}
2 & 1 & 3\\
2 & 1 & 3\\
1 & 2 & 3
\end{matrix}
\right],
\left[\begin{matrix}
2 & 1 & 3\\
2 & 1 & 3\\
2 & 1 & 3
\end{matrix}
\right]
\right\}.
\]
\end{example}

The goal of the minimality check is, given a partial cycle set, to determine whether it is possible to extend it to a lexicographically minimal complete cycle set. One way to achieve this would be to loop over all (exponentially many) such extensions and check for minimality. 
This would, however, clearly not be feasible. 
Instead, inspired by and building on the work of the original SMS paper~\cite{KS21SATModuloSymmetriesGraphGeneration}, we take a different approach, and we will search for a symmetry $\perm$ (in our case, this is a permutation  of $X$ that respects the fixed diagonal) that guarantees that for every complete extension $\ybemat$ of $\pybemat$, 
\[\perm(C)\prec C. \]

We will call such a permutation a \emph{witness of non-minimality}, and the goal of the minimality check is to find such \emph{witnesses}.

Let us describe a sufficient condition for finding such witnesses that forms the basis of our algorithms.
Given $\pybemat\in\pcycsets$ and a permutation $\perm$, we can apply $\perm$ to $\pybemat$ as follows; for each cell $c=\tuple{i,j}$, we define the image of $c$ under $\perm$ as $\perm(c)=\tuple{\perm(i),\perm(j)}$.
Given a non-empty domain $S\subseteq\set$, we define $\perm(S)=\{\perm(x)\mid x\in S\}$.
Using this, we define $\perm(\pybemat)$ as the partial cycle set where $\perm(\pybemat)_c=\invperm(\pybemat_{\perm(c)})$ for each cell $c$. 
Note that for complete cycle sets, this coincides with the definition of $\perm(C)$ given above. 
We also define a lexicographical order $\leqP$ over partial cycle sets, with the specific characteristic that if $\pybemat\lneqP\pybemat'$ then for all extensions $\ybemat\in\xcycsets{\pybemat}$ and $\ybemat'\in\xcycsets{\pybemat'}$ it holds that $\ybemat\lneqC\ybemat'$.
Hence, if the minimality check can find a permutation $\perm$ such that $\perm(\pybemat)\lneqP\pybemat$, we have that $\perm(\ybemat)\lneqC\ybemat$ for all extensions $\ybemat\in\xcycsets{\pybemat}$, i.e., that $\perm$ is a witness of non-minimality.
In order to define $\lneqP$, we note that $\pybemat\lneqP\pybemat'$ should guarantee that 
\begin{itemize}
 \item there is a cell $c$, the value of which in $\pybemat$ is guaranteed to be strictly smaller than that of  $\pybemat'_c$, and,
 \item for all cells $c'<c$, the value of $c'$ in  $\pybemat$ is guaranteed to be at most the value it takes in  $\pybemat'$.
\end{itemize}
Since we want to compare the value of cells where the value is potentially not determined yet, we extend the order on values to domains as follows. 
\begin{definition}
Let $S$ and $S'$ be two non-empty subsets of $\set$, we define 
	\begin{itemize}
		\item $S\leqPc S'$ if and only if $\max{S}\leq\min{S'}$ and
		\item $S\lneqPc S'$ if and only if $\max{S}<\min{S'}$.
	\end{itemize}
\end{definition}

\begin{definition}
	Let $\pybemat$ and $\pybemat'$ be two partial cycle sets and let $c$ be a cell.
	We say that \emph{$\pybemat$ is below $\pybemat'$ up to cell $c$} (and denote this $\pybemat\leqP_c \pybemat'$) if for all cells $c'\leq c$ it holds that $\pybemat_{c'}\leqPc\pybemat'_{c'}$. 
\end{definition}

\begin{definition}\label{def:lneqP}
	 We say $\pybemat$ is \emph{strictly smaller} than $\pybemat'$ (and denote this $\pybemat\lneqPc\pybemat'$) if there is a cell $c$ such that 
\begin{itemize}
 \item $\pybemat\leqP_{\mathrm{pred}(c)} \pybemat'$, where $\mathrm{pred}(c)$ is the cell immediately preceding $c$ in the lexicographic ordering, and
 \item $\pybemat_c\lneqPc\pybemat'_c$.
\end{itemize}
We say that $\pybemat$ is \emph{at most} $\pybemat'$ (and denote this $\pybemat\leqP\pybemat'$) if either $\pybemat \lneqPc \pybemat'$ or if for all cells $c$, $\pybemat_c\leqPc\pybemat'_c$.
\end{definition}

These definitions guarantee strong properties of complete extensions of $\pybemat$ and $\pybemat'$. 
Proofs of our results can be found in Appendix \ref{app:proofs}.
\begin{proposition}\label{prop:prop}
	Let $\pybemat$ and $\pybemat'$ be two partial cycle sets. The following properties hold. 
\begin{enumerate}
 \item \label{prop:fst} If $\pybemat\lneqP\pybemat'$, then for all  $\ybemat\in\xcycsets{\pybemat}$ and $\ybemat'\in\xcycsets{\pybemat'}$, it holds that $\ybemat\lneqC\ybemat'$.
 \item \label{prop:snd} If $\pybemat\leqP\pybemat'$, then for all  $\ybemat\in\xcycsets{\pybemat}$ and $\ybemat'\in\xcycsets{\pybemat'}$, it holds that $\ybemat\leqC\ybemat'$. 
 \item \label{prop:thrd} If $\pybemat\leqP\pybemat'$ and $\pybemat'\leqP\pybemat$, then $\pybemat=\pybemat'$ and both cycle sets are complete.
\end{enumerate}
\end{proposition}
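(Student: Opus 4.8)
The plan is to isolate one elementary observation and reuse it throughout: if $\ybemat\in\xcycsets{\pybemat}$ then $\ybemat_c\in\pybemat_c$ for every cell $c$, hence $\min\pybemat_c\le\ybemat_c\le\max\pybemat_c$. Together with the subset order $\leqPc$, this converts every statement $\pybemat_c\leqPc\pybemat'_c$ (resp.\ $\pybemat_c\lneqPc\pybemat'_c$) into the numerical fact $\ybemat_c\le\ybemat'_c$ (resp.\ $\ybemat_c<\ybemat'_c$) for arbitrary $\ybemat\in\xcycsets{\pybemat}$ and $\ybemat'\in\xcycsets{\pybemat'}$. For parts~(\ref{prop:fst}) and~(\ref{prop:snd}), if $\xcycsets{\pybemat}$ or $\xcycsets{\pybemat'}$ is empty there is nothing to prove, so both may be assumed nonempty there.

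For part~(\ref{prop:fst}) I would first record an auxiliary claim about the lexicographic order $\leqC$: if there is a cell $c$ such that $\ybemat_{c'}\le\ybemat'_{c'}$ for all cells $c'\le c$ while $\ybemat_c<\ybemat'_c$, then $\ybemat\lneqC\ybemat'$. This holds because $\ybemat$ and $\ybemat'$ then differ at $c$, so their first differing cell $c^\ast$ satisfies $c^\ast\le c$; there $\ybemat_{c^\ast}\le\ybemat'_{c^\ast}$ and $\ybemat_{c^\ast}\ne\ybemat'_{c^\ast}$, hence $\ybemat_{c^\ast}<\ybemat'_{c^\ast}$, which is the definition of $\ybemat\lneqC\ybemat'$. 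Now unfolding $\pybemat\lneqP\pybemat'$ via \cref{def:lneqP} yields a cell $c$ with $\pybemat_{c'}\leqPc\pybemat'_{c'}$ for all $c'<c$ and $\pybemat_c\lneqPc\pybemat'_c$; the first-paragraph observation turns these into $\ybemat_{c'}\le\ybemat'_{c'}$ for $c'<c$ and $\ybemat_c<\ybemat'_c$, and the auxiliary claim finishes. Part~(\ref{prop:snd}) then follows quickly: $\pybemat\leqP\pybemat'$ either reduces to $\pybemat\lneqP\pybemat'$, handled by part~(\ref{prop:fst}), or gives $\pybemat_c\leqPc\pybemat'_c$ for every cell, hence $\ybemat_c\le\ybemat'_c$ everywhere, so either $\ybemat=\ybemat'$ or strictness holds at the first differing cell; in both cases $\ybemat\leqC\ybemat'$.

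For part~(\ref{prop:thrd}) I would argue entirely at the level of partial cycle sets, without extensions. The key sub-claim is that $\pybemat\lneqP\pybemat'$ implies $\pybemat'\not\leqP\pybemat$: letting $c$ be the witness cell of $\pybemat\lneqP\pybemat'$ (so $\max\pybemat_c<\min\pybemat'_c$ and $\max\pybemat_{c'}\le\min\pybemat'_{c'}$ for $c'<c$), a short case analysis on how $\pybemat'\leqP\pybemat$ might hold — either from the non-strict alternative $\max\pybemat'_{c'}\le\min\pybemat_{c'}$ holding at all cells, or from a second witness cell $d$, splitting into $d<c$, $d=c$, $d>c$ — always produces a contradictory chain such as $\max\pybemat_c<\min\pybemat'_c\le\max\pybemat'_c\le\min\pybemat_c\le\max\pybemat_c$ (or its mirror image at $d$). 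Hence from both $\pybemat\leqP\pybemat'$ and $\pybemat'\leqP\pybemat$ the strict alternative is excluded on each side, leaving $\max\pybemat_c\le\min\pybemat'_c$ and $\max\pybemat'_c\le\min\pybemat_c$ for every cell $c$; chaining these gives $\max\pybemat_c\le\min\pybemat'_c\le\max\pybemat'_c\le\min\pybemat_c\le\max\pybemat_c$, so all four quantities coincide. In particular $\min\pybemat_c=\max\pybemat_c$, so $\pybemat_c$ is a singleton (likewise $\pybemat'_c$), and $\pybemat_c=\{\max\pybemat_c\}=\{\max\pybemat'_c\}=\pybemat'_c$; thus $\pybemat=\pybemat'$, every cell is defined, and it is a complete cycle set.

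I expect the main obstacle to be purely bookkeeping, in two places: in parts~(\ref{prop:fst})--(\ref{prop:snd}), aligning the witness cell $c$ coming from $\lneqP$ with the \emph{first} cell at which two complete extensions genuinely differ (the auxiliary claim above is precisely the device that keeps this clean); and in part~(\ref{prop:thrd}), the case split ruling out two coexisting strict witnesses, which is routine but does require checking $d<c$, $d=c$ and $d>c$ separately. Everything else is a direct unfolding of the definitions of $\leqPc$, $\leqP$, and the lexicographic order $\leqC$.
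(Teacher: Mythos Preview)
Your proposal is correct and follows essentially the same approach as the paper: reduce everything to the min/max bounds coming from $\leqPc$ and $\lneqPc$, then chain inequalities. You are in fact more careful than the paper on two points the paper glosses over---your auxiliary claim in part~(\ref{prop:fst}) explicitly handles the possibility that the first cell where $\ybemat$ and $\ybemat'$ differ lies strictly before the witness cell $c$, and your case analysis in part~(\ref{prop:thrd}) explicitly rules out the strict alternative of $\leqP$ on either side before chaining, whereas the paper tacitly jumps straight to ``for all cells $c$, $\max\pybemat_c\le\min\pybemat'_c$ and $\max\pybemat'_c\le\min\pybemat_c$''.
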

This now gives rise to a sufficient condition for being a witness of non-minimality (which allows us to conclude that there are no lexicographically minimal extensions of $\pybemat$) or that some values of a cell are impossible (which would allow us to propagate extra information during search). 
\begin{theorem}\label{thm:main}
 Given a partial cycle set $\pybemat$ and permutation $\perm$, we have that:
 \begin{itemize}
  \item if $\perm(\pybemat)\lneqP\pybemat$, then $\perm$ is a witness of non-minimality, 
  \item If $\perm(\pybemat)\leqP_c\pybemat$ then
  there is an extension $\pybemat'$ of $\pybemat$ such that
  \begin{enumerate}
  	\item $\pybemat'$ and $\perm(\pybemat')$ are fully defined on all $c'\leq c$, and  
  	\item all $\leqC$-minimal extensions $\ybemat\in\xcycsets{\pybemat}$ are also extensions of $\pybemat'$.
  \end{enumerate}
 \end{itemize}
\end{theorem}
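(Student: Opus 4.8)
The plan is to obtain both bullets from \cref{prop:prop} through one bridging observation: applying a permutation commutes with the ``extends to'' relation, i.e.\ $\perm(\xcycsets{\pybemat})=\xcycsets{\perm(\pybemat)}$. I would prove this by a short calculation: if $\ybemat\in\xcycsets{\pybemat}$, then $\perm(\ybemat)$ is again a complete non-degenerate cycle set (isomorphisms preserve all the cycle-set conditions), and for every cell $c'$ the inclusion $\ybemat_{\perm(c')}\in\pybemat_{\perm(c')}$ yields $\perm(\ybemat)_{c'}=\invperm(\ybemat_{\perm(c')})\in\invperm(\pybemat_{\perm(c')})=\perm(\pybemat)_{c'}$, so $\perm(\ybemat)$ extends $\perm(\pybemat)$; the reverse inclusion follows by applying $\invperm$ in place of $\perm$.

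The first bullet is then immediate. For an arbitrary $\ybemat\in\xcycsets{\pybemat}$ we have $\perm(\ybemat)\in\xcycsets{\perm(\pybemat)}$ by the observation, and since $\perm(\pybemat)\lneqP\pybemat$, the first part of \cref{prop:prop} applied to the ordered pair $(\perm(\pybemat),\pybemat)$ gives $\perm(\ybemat)\lneqC\ybemat$. As this holds for every complete extension $\ybemat$ of $\pybemat$ (and is vacuous if there are none), $\perm$ is a witness of non-minimality.

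For the second bullet, write $M$ for the set of $\leqC$-minimal complete extensions of $\pybemat$ and set $S:=\{c'\mid c'\leq c\}\cup\{\perm(c')\mid c'\leq c\}$. If $M=\emptyset$, then condition~2 is vacuous and I take $\pybemat'$ to be the extension of $\pybemat$ that collapses every cell in $S$ to a singleton (an arbitrary element of its domain) and leaves the remaining cells unchanged; it is fully defined on $S$, so $\pybemat'$ and $\perm(\pybemat')$ are both fully defined on $\{c'\leq c\}$. Otherwise, fix $\ybemat\in M$. The crux is the claim that $\ybemat$ and $\perm(\ybemat)$ agree on every cell $c'\leq c$, where moreover $\ybemat_{c'}=\min(\pybemat_{c'})$. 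I would establish it from the inequality chain
\[\perm(\ybemat)_{c'}\;\leq\;\max(\perm(\pybemat)_{c'})\;\leq\;\min(\pybemat_{c'})\;\leq\;\ybemat_{c'},\]
valid for each $c'\leq c$ (the outer inequalities because $\perm(\ybemat)$ extends $\perm(\pybemat)$ and $\ybemat$ extends $\pybemat$, the middle one because $\perm(\pybemat)\leqP_c\pybemat$), combined with $\ybemat\leqC\perm(\ybemat)$ (which holds since $\ybemat$ is $\leqC$-minimal and $\perm$ is an isomorphism of the problem): if the two matrices differed anywhere, then at the first cell $c^\ast$ where they differ $\leqC$-minimality would force $\ybemat_{c^\ast}<\perm(\ybemat)_{c^\ast}$, contradicting the chain whenever $c^\ast\leq c$. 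Hence they coincide on $\{c'\leq c\}$, every inequality above becomes an equality, and since $\perm(\ybemat)_{c'}=\invperm(\ybemat_{\perm(c')})$ this also gives $\ybemat_{\perm(c')}=\perm(\ybemat_{c'})$ for $c'\leq c$. Consequently the restriction of $\ybemat$ to $S$ does not depend on the choice of $\ybemat\in M$, so I define $\pybemat'_{c''}=\{\ybemat_{c''}\mid\ybemat\in M\}$ for $c''\in S$ and $\pybemat'_{c''}=\pybemat_{c''}$ otherwise; this $\pybemat'$ extends $\pybemat$, is fully defined on $S$ (hence condition~1 holds for $\pybemat'$ and $\perm(\pybemat')$), and is extended by every $\ybemat\in M$ (condition~2).

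The main obstacle is the second bullet, and within it the cells $\perm(c')$ with $c'\leq c$ but $\perm(c')>c$: these are forced but lie ahead of $c$ in the lexicographic order, so one cannot reason purely cell-by-cell in that order. The identity $\ybemat_{\perm(c')}=\perm(\ybemat_{c'})$ — a consequence of $\ybemat=\perm(\ybemat)$ on the prefix up to $c$ — is precisely what fixes those cells, and the case $M=\emptyset$ has to be separated out so that $\pybemat'$ stays a legitimate extension of $\pybemat$ there. The remaining pieces — the commuting observation, the first bullet, and the verification that the constructed $\pybemat'$ meets both conditions — are routine.
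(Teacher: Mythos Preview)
Your argument is correct and rests on the same idea as the paper's: the hypothesis $\perm(\pybemat)\leqP_c\pybemat$ gives the cellwise chain $\perm(\ybemat)_{c'}\leq\max\perm(\pybemat)_{c'}\leq\min\pybemat_{c'}\leq\ybemat_{c'}$ for $c'\leq c$, and $\leqC$-minimality of $\ybemat$ then forces all of these to collapse to equalities, pinning down $\ybemat$ on the cells in $S$. The paper packages this as an informal iterative procedure---go to the first cell where $\pybemat$ or $\perm(\pybemat)$ is undefined, observe that any assignment other than $\pybemat_{c'}\mapsto\min\pybemat_{c'}$ and $\perm(\pybemat)_{c'}\mapsto\max\perm(\pybemat)_{c'}$ would make $\perm$ a witness of non-minimality by the first bullet, repeat---whereas you give a single direct argument handling all $c'\leq c$ at once. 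You are also more careful on three points the paper's sketch leaves implicit: you spell out that $\perm(\ybemat)\in\xcycsets{\perm(\pybemat)}$, which is what is really needed to invoke \cref{prop:prop} for the first bullet; you separate out the case $M=\emptyset$ so that $\pybemat'$ is guaranteed to be a legitimate extension of $\pybemat$; and you explicitly derive $\ybemat_{\perm(c')}=\perm(\ybemat_{c'})$ to fix the cells $\perm(c')$ that may lie beyond $c$ in the lexicographic order. The paper's iterative phrasing has the advantage of mirroring how propagation is actually implemented; your direct argument is tighter as a proof.
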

Hence, when given a partial cycle set $\pybemat$, not only permutations $\perm$ for which $\perm(\pybemat)\lneqP\pybemat$ are useful, but also those for which $\perm(\pybemat)\leqP_c\pybemat$ (if $\pybemat$ is not complete yet).

\subsection{Clause Learning} 
In this section, we explain which clauses are constructed after the minimality check has found a permutation $\perm$ that can be used to refine or exclude the current assignment $\asg$.
Let us first focus on the case where $\perm$ is a witness of non-minimality, i.e., we will describe the clause that justifies backtracking. 

In principle, we could learn a clause that simply eliminates the current partial assignment by $\bigvee_{\ell\in\asg}\olnot \ell$, i.e., stating that at least one literal needs to be different. However, we want to learn a clause that captures the essence of why $\perm$ is a witness of minimality so that it excludes as many assignments as possible. 
Explaining why $\perm$ is a witness of non-minimality boils down to:
 \begin{itemize}
 	\item explaining why $\perm(P)_{c'}\leqP P_{c'}$ for each ${c'}< c$, and 
 	\item explaining why $\perm(P)_{c}\lneqP P_{c}$.
 \end{itemize}
 Hence, the clause will state that (at least) one of these conditions has to change. 
For  $\perm(P)_{c'}\leqP P_{c'}$ to change, either $\perm(P)_{c'}$ should be able to take a value larger than $\min P_{c'}$ or $P_{c'}$ should be able to take a value below $\max \perm(P)_{c'}$. The explanation for $\perm(P)_{c}\lneqP P_{c}$ is similar. 
As such, the clause that we learn is: 
\begin{equation}\label{eq:clause}
	\bigvee_{x\geq \min P_{c}}\matvar_{\pi(c),\pi(x)} \vee 
	\bigvee_{x\leq \max \pi(P)_{c}} \matvar_{c,x} \vee 
	\bigvee_{c'<c}\left(
		\bigvee_{x> \min P_{c'}}\matvar_{\pi(c'),\pi(x)} \vee 
		\bigvee_{x< \max \pi(P)_{c'}} \matvar_{c',x}
		\right)
		\end{equation}
		
\subsubsection{Propagation}
For propagation, everything is completely analogous to the case where we find a witness of non-minimality.
The reason is that literal $\ell$ can be propagated in $\asg$ using Theorem \ref{thm:main} if and only if $\perm$ would be a witness of non-minimality in $\asg\cup\{\olnot\ell\}$. 
There is one important design decision to make here; often there are multiple literals that can be propagated. 
We can add a clause for every such literal, or we can choose one of them and only propagate that. 
The propagation of a single literal often causes more propagations at the same row of the cycle set, resulting in further eliminations of values from the same cell.
If our minimality check detects multiple potential propagations, they will always be over the same cell. 
In this case, one could prefer to avoid adding propagation constraints for things that can already be caught by the unit propagation phase that follows.
So, while adding all clauses is feasible, we have chosen to only propagate one literal at a time and let the solver continue. 

\subsubsection{Optimization} 
The clause that is learned in this way can be optimized further by taking the problem structure into account in some ways. In particular, we know that for several sets of variables an \texttt{ExactlyOne} constraint holds (\eg all variables $\matvar_{c,\cdot}$ for any cell $c$, or all variables $\matvar_{i,\cdot,x}$ for a fixed $i$ and $x$). 
If for such a set $S$ of variables (for which we know exactly one will be true in every satisfying assignment) and for some variable $s\in S$, our clause $C$ contains $\bigvee_{s'\in S,s'\neq s} s'$, then this can be replaced by simply stating $\olnot s$. 
Indeed, since exactly one of the variables holds, stating that the last one is false is equivalent to stating that one of the others is true.
Similarly, if the clause already contains $\olnot s$, all literals $s'$ with $s'\in S$ can safely be removed from the clause. 

For example, assume we are searching for cycle sets of size $4$ and have fixed a diagonal that has  $\pybemat_{\tuple{1,1}}=1$. Suppose the clause \eqref{eq:clause} is $\matvar_{1,2,2}\lor\matvar_{1,2,3}$, then this expresses that the cell $\tuple{1,2}$ should contain a $2$ or a $3$. Now this is the same as saying that it cannot contain the value four: $\olnot\matvar_{1,2,4}$. 
This simple trick can be used to get shorter clauses with better propagation properties. 

\section{Implementating the Minimality Check}
Given a partial cycle set $\pybemat\in\pcycsets$, the goal of the minimality check is to find an isomorphism $\perm$ of the problem that guarantees that  $\perm(\pybemat)\leqP(\pybemat)$ so that \cref{thm:main} guarantees that we can either discard this partial cycle set or refine it. 
Inspired by the original SMS papers, the minimality check can perform a backtracking search (over the space of permutations!) with a strong pruning mechanism. 
It iteratively refines a so-called \emph{partial permutation}, until either a useful permutation (satisfying one of the conditions of Theorem \ref{thm:main}) is found, or until it is certain that no such permutations exist.\footnote{We will discuss that it is sometimes useful not to do a complete check here, as long as completeness is guaranteed on complete assignments.}

On the other hand, the minimality check can also be viewed as a combinatorial search problem.
More specifically, given the current assumptions, \ie the current (partial) cycle set, we want to decide whether there exists an isomorphism that is a witness of non-minimality.
If no such permutation exists, we know that the current cycle set is lexicographically minimal; otherwise, we can use the satisfying assignment to extract the witness.

\subsection{Backtracking Approach}
We first describe the backtracking approach based on the original SMS paper~\mycite{SMS}.
A \emph{partial permutation} is a function $\perm:\set\to2^\set\setminus\emptyset$ that maps every element of $\set$ to a non-empty set of values.
Intuitively, the value $\perm(x)$ represents the possible images of $x$ under any completion of $\perm$.
A partial permutation is called \emph{complete} if for all $x\in\set$, $\perm(x)$ is a singleton and if $\perm(x)\neq\perm(y)$ for all $y\in\set$ where $y\neq x$.
In this case, we identify $\perm$ with the actual permutation.
Given two partial permutations $\perm$ and $\perm'$, we say that $\perm'$ \emph{extends} $\perm$ if and only if for all $x\in\set$ it holds that $\perm'(x)\subseteq\perm(x)$.

Our algorithm will recursively refine the permutation using the given partial cycle set until it holds that for all complete extensions $\perm'$ of $\perm$, $\perm'(\pybemat)\leqP\pybemat$ (where the inequality will be guaranteed to be strict if $\pybemat$ is complete).
To do this, we recursively assign a value $y$ to $\perm(x)$, starting with $x=1$, and refine the obtained partial permutation using available (domain-specific) knowledge.
To refine a partial permutation $\perm$ after making a new assignment, we need to take the following properties into account:
\begin{itemize}
 \item the image of $\pybemat$ under $\perm$ should be smaller than or equal to $\pybemat$, and
 \item the partial permutation $\perm$ can be extended to an isomorphism of the problem.
\end{itemize}
As mentioned earlier, the algorithm first assigns a value $x_1\in\perm(1)$ to $\perm'(1)$.
Because the completion of $\perm'$ needs to be well-defined, we refine $\perm'$ such that $x_1$ is no longer an option for other elements in $\set\setminus\{1\}$.
To ensure that $\perm'(\pybemat)\leqP\pybemat$, we need to enforce that $\perm'(\pybemat)_{\tuple{1,1}}\leqP\pybemat_{\tuple{1,1}}$, or equivalently, $\max{\perm'(\pybemat)_{\tuple{1,1}}}\leq\min{\pybemat_{\tuple{1,1}}}$.
Hence, we can refine $\perm'$ such that \[\max{\perm'(\pybemat)_{\tuple{1,1}}} = \max\perm'^{-1}(\pybemat_{\tuple{\perm'(1),\perm'(1)}})\leq\min{\pybemat_{\tuple{1,1}}}.\] Some care is needed here: even though we know the possible values in the cell $\pybemat_{\tuple{\perm'(1),\perm'(1)}}$, since we are still constructing $\pi'$, we do not know exactly what the values in $\perm'^{-1}(\pybemat_{\tuple{\perm'(1),\perm'(1)}})$ will be. This will on the one hand cause extra propagation: if $x\in \pybemat_{\tuple{\perm'(1),\perm'(1)}}$, we will propagate that $\pi^{-1}(x)\neq y$ for any $y>\min P_{\tuple{1,1}}$. But after doing this there might still be multiple options left for this cell. In that case, we make a further choice on refining $\perm$ so that for each $x\in \pybemat_{\tuple{\perm'(1),\perm'(1)}}$, $\perm^{-1}(x)$ is fixed (and again, bijectivity is enforced).
	 	
Last, we need to ensure that the extended permutation $\perm'$ can still be extended to an isomorphism of the problem.
If the diagonal is not fixed, all permutations $\perm\in\symmgroup_n$ are isomorphisms and hence no extra measures need to be taken.
If however, a diagonal $T$ is fixed, only the permutations $\perm\in\centr_{\symmgroup_n}(T)$ are isomorphisms of the current problem being solved. 
Note that a permutation $\perm$ fixes the diagonal if it maps each cycle of the diagonal onto a same-length cycle, in the same order.
As a consequence, as soon as $\perm'(x)$ is fixed, for each $y$ in the same cycle as $x$, $\perm'(y)$ will be fixed accordingly. For example, consider the diagonal $(2 3 1)(5 6 4)$. In case we fix $\perm'(1)=6$, we will need to map $\perm'(3)=5$ and $\perm'(2)=4$ in order to fix this diagonal.

As soon as this is done, if we obtained  $\perm'(\pybemat)_{\tuple{1,1}}\lneqP\pybemat_{\tuple{1,1}}$, we have found a witness of non-minimality. 
Otherwise, if $\perm'(\pybemat)_{\tuple{1,1}}\leqP\pybemat_{\tuple{1,1}}$ and either $\perm'(\pybemat)_{\tuple{1,1}}$ or $\pybemat_{\tuple{1,1}}$ is not complete (i.e., still allows multiple values), we can choose to either learn clauses that eliminate all but one remaining values in the cell, or continue the search for a witness of non-minimality. 
When continuing the search, we repeat the same process for the next cell $\tuple{1,2}$, etcetera. 
As soon as $\perm$ is fully defined, it can be verified without further choices against the remaining cells of $\pybemat$. 
Whenever this process gets stuck,  we backtrack and make a different assignment to $\perm(x)$ for the last assigned $x$.

For the checks of partial cycle sets, the search can be aborted early, as a result, the check is postponed and repeated when more information is available.
Specifically, users can limit the number of nodes visited in the search tree (in a depth-first fashion).
Next, users can also limit the number of partial cycle sets that are checked by setting a frequency at which the minimality check should be executed.
For complete cycle sets, a full search is necessary to guarantee complete symmetry breaking.

\subsubsection{Representing Partial Permutations}
We give a short description of the internal representation of partial permutations.
The internal representation used in this tool was inspired by the one used in SMS.
We represent partial permutations as an ordered partition of $\set$, i.e., $\perm=[\set_1,\set_2,\ldots,\set_m]$, where $\set_1\cup\ldots\cup\set_m=\set$ and $\set_1,\set_2,\ldots,\set_m$ are pairwise disjoint. 
The ordered partition represents the partial permutation $\perm$ where $\perm^{-1}(x_1)<\perm^{-1}(x_2)$ for all $x_1\in\set_i,x_2\in\set_j$ with $i<j$. 
In other words, given $x_1<x_2$, $\perm$ maps $x_2$ to the same set as $x_1$ or to a set that appears after that set in the ordered partition.
Given the ordered partition $\perm=[\{6,5,4\},\{3\},\{2, 1\}]$, we have that $\perm(1)=\perm(2)=\perm(3)=\{4,5,6\}$, $\perm(4)=3$ and $\perm(5)=\perm(6)=\{1,2\}$.

Note that extracting a fully defined permutation $\perm'$ from an ordered partition $\perm$ can be done quite easily. 
We represent the ordered partition $\perm$ using two vectors; a vector of integers $V$, representing the content of the partitions and a vector which signals the start of a partition, $P$. 
In this case, the vector $V$ already represents a valid permutation $\perm'$ if we define $\perm'(x)=y$ if and only if $V[x]=y$.

This representation becomes even more useful when considering fixed diagonals.
In this case, we only need to consider complete permutations $\perm$ that leave the diagonal invariant.
Hence, we can use the initial ordered partition to represent only those permutations.

\subsection{Incremental SAT-Based Approach}
The minimality check itself can also be viewed as a combinatorial search problem.
In this paper, we have chosen to encode the minimality check using a propositional formula, whose satisfiability is determined using a second SAT-solver.
Specifically, we encode what it means for an isomorphism of the problem to be a witness of non-minimality.
In order to reason about a specific (partial) cycle set, we can use so-called \emph{assumptions} to fix the truth values for variables representing the cycle set.
As such, we can use the same SAT-solver to reason about many same-sized cycle sets, while keeping track of previously learned clauses.
In the future, this incremental minimality check can be extended to also detect whether the current partial cycle set could be refined.
This technique is similar to the co-certificate learning added to SAT Modulo Symmetries~\cite{KPS23Co-CertificateLearningSATModuloSymmetries}, however co-certificate learning is concerned with properties that are orthogonal to the minimality check.
To the best of our knowledge, the minimality check itself has never been performed using a SAT oracle.

\newcommand\omatvar{w\xspace}
\newcommand\nbvar{n\xspace}
\newcommand\pvar{p\xspace}
\newcommand\mat{M\xspace}
\subsubsection{Minimality Check Encoding}
\label{ss:MCE}
Before describing how this incremental minimality check works, we first give a high-level overview of the used encoding, a more detailed description can be found in Appendix \ref{app:mincheckCNF}.
The problem that we encode in propositional logic verifies whether there exists a witness of non-minimality (i.e., a permutation $\pi$ for which \cref{def:lneqP} holds) for a given cycle set $(\set,\binop)$ represented by matrix $\mat$.
We start by describing an encoding for the minimality check of complete cycle sets before adapting it in order to also deal with partial cycle sets.

First, we introduce Boolean variables that represent the matrix $\mat$ and variables that represent its image $\pi(M)$, also referred to as $M'$ here.
These matrices are encoded using the same one-hot encoding described in \cref{ss:satenc}.
The first cycle set, represented by matrix $\mat$, is encoded using the variables $\omatvar_{i,j,k}$ (for all $i,j,k\in\set$), where $\omatvar_{i,j,k}$ is true if $\mat_{i,j}=k$.
Similarly, we encode its image represented by the matrix $\mat'$, using the variables $\omatvar'_{i,j,k}$ for all $i,j,k\in\set$.
In order to represent the permutation $\pi$, we introduce the variables $\pvar_{i,j}$ for all $i,j\in\set$ where $\pvar_{i,j}$ is true if $\pi(i)=j$.

Next, we add clauses to ensure that the permutation is indeed a well-defined isomorphism of the problem and that $\mat'$ equals $\pi(M)$.
Furthermore, we ensure that $\pi$ is a witness of non-minimality (i.e., that $M'=\pi(M)<M$). 
To do this, we add static symmetry breaking constraints inspired by the compact encoding for lex-leader constraints~\cite{DBBD16ImprovedStaticSymmetryBreakingSAT}.
First, we introduce an order $\leq_{\omatvar'}$ over the variables used to encode the matrix $\mat'$.
This order should ensure that an assignment is lexicographically minimal if $\mat'=\pi(\mat)\leq\mat$, assuming that $\lfalse<\ltrue$.
We encode the static symmetry breaking constraints using the auxiliary variables $n_{c,i}$ which are true if $\omatvar'_{c',i'}\Leftrightarrow\omatvar_{c',i'}$ for all $\omatvar'_{c',i'}\leq_{\omatvar'}\omatvar'_{c,i}$.

Because we are only considering complete cycle sets we can add redundant \texttt{ExactlyOne} constraints over the matrix variables (similar to those added in the encoding of the original problem).
These constraints ensure that each matrix cell is assigned exactly one value, and that rows in the matrices contain unique values.
For the original matrix $\mat$, this will not have any impact since all variables are fixed using assumptions, however, in context of the image of $\mat$, \ie $\mat'$, these constraints will enhance propagation.

In order to reason about partially defined cycle sets, some minor adaptations to this encoding are needed.
First, the variables representing the matrices get a slightly different meaning; when $\omatvar^{(')}_{c,k}$ is true it no longer holds that $\mat^{(')}_{c}=k$, but instead we have that $k\in\mat^{(')}_{c}$.
As a result, it is no longer correct to include the redundant \texttt{ExactlyOne} constraints over these variables. 
Next, to decide whether $\pi$ is a witness of non-minimality, we now use the order defined in \cref{def:lneqP}, and ensure that $M\lneqP M'$.
To do this, we add variables $l_{c,k}$ and $g_{c,k}$ for all cells $c$ and $k\in\set$, which are true respectively if $M'_{c}<k$ and $M_{c}>k$.
These variables allow us to reason about the maximum and minimum possible values of a matrix cell.
Next, we introduce the variables $\nbvar'_{c,k}$ for all cells $c$ and $k\in\set$, which similar to the $\nbvar_{c,k}$-variables indicate that $\perm$ can still be a witness of non-minimality at this point.
Similar to the encoding described in \cref{ss:diagfix}, both the partial and complete encoding can be optimized when a diagonal is fixed.

Note that the partial encoding can also be used to perform the minimality check for complete cycle sets.
However, preliminary experiments showed that using separate complete and partial minimality checks results in a faster enumeration time, hence this is the setup used in our experiments below.

\subsubsection{The Incremental Minimality Check}
During the search, the enumerating SAT solver will make calls to the minimality check to verify whether the current (partial) assignment is lexicographically minimal.
The incremental minimality check consists of one (or two, if a separate encoding is used for complete cycle sets) incremental SAT solvers that repeatedly try to find satisfying assignments for the given formula.
To ensure that the solvers reason about the current cycle set, we use assumptions to ensure that the $\omatvar$-variables defined in the previous section get the correct truth values.
If the SAT solver can find a satisfying assignment, there exists a witness of non-minimality for the assumed cycle set.
The minimality check will extract this permutation from the satisfying assignment to create a breaking clause that excludes the current assignment (and all of its extensions).
This clause can then in turn be learned by the enumerating SAT solver.
If, on the other hand, the formula is unsatisfiable given the current assumptions, the current solution is minimal.
Note that if the current assignment is complete, this implies that we have found a lexicographically minimal solution which can be added to the list of non-isomorphic solutions.

Similar to the backtracking approach, one can opt to not run partial minimality checks until completion.
In this context, one could decide to limit the number of conflicts or decisions or to only add a selection of the symmetry breaking constraints.
Once again, the frequency of the minimality check for partial cycle sets can also be adapted.
For complete cycle sets, no such limits can be imposed, in order to guarantee a fully non-isomorphic enumeration.

\section{Experimental Evaluation}
\label{sec:experiments}
Our enumeration tool, \texttt{YBE-SMS}, is implemented on top of the state-of-the-art SAT-solver \texttt{CaDiCal}~\cite{BiereFazekasFleuryHeisinger-SAT-Competition-2020-solvers} (version 1.9.4), extended with a so-called \emph{user-propagator} implemented using the \texttt{IPASIR-UP}-API~\cite{FNPKSB23IPASIR-UPUserPropagatorsCDCL}\footnote{The tool is available on \url{https://gitlab.kuleuven.be/krr/software/ybe-sms}.}.
This API allows users to closely interact with the solver, for example by tracking variable assignments, choosing variables to branch on and most importantly by adding new clauses to the solver \emph{during} the search process, which allows users to implement custom propagators.
In our case, the custom propagator will ensure that current assignments remain lexicographically minimal by performing the minimality check.
If the minimality check fails, a clause is constructed in order to make the solver backtrack or propagate.
The minimality check can either be performed using the backtracking approach or by using a second (incremental) instance of the \texttt{CaDiCal} solver.
Each time when the minimality is performed, the current partial assignment is passed to that second instance by means of assumptions. 
All experiments were performed on a machine with an AMD(R) Genoa-X CPU running Rocky Linux 8.9 with Linux kernel 4.18.0.

\subsection{Generating Non-Degenerate Involutive Solutions of Size 10}
In the first set of experiments, we compare the new \texttt{YBE-SMS} implementations against the implementation of Akg\"un, Mereb and Vendramin~\cite{AMV22Enumerationset-theoreticsolutionsYang-Baxterequation} (referred to as AMV22).
To do this, we have identified configurations that find a suitable balance between the frequency of performing this check and the limitations we impose on it; these details can be found in Appendix \ref{app:conf}.
In \cref{fig:gapvssms}, we see that both new approaches are significantly faster than AMV22 for all sizes of the problem up to 10 (which is the highest cardinality solved by AMV22). 
For example, enumerating all solutions of size 10 using the AMV approach took more than 8 days.
The same enumeration with the backtracking approach took only 12 hours and with the incremental approach all solutions of size 10 were enumerated in less than two hours.
For the backtracking approach, this speed-up seems to diminish as the problem grows but for the incremental approach, the opposite appears to be true.
This new incremental approach is approximately 106 times faster than the AMV22 approach at enumerating all solutions of size 10, the backtracking approach only reaches a speedup of 16.1 for the enumeration of size 10 compared to AMV22.
As a result, with this new incremental approach, we were able to enumerate all solutions of size 11 which is a novel result, of value to people studying the YBE.
A more detailed view of the results in \cref{fig:gapvssms} is given in \cref{tab:gapvsssms} which can be found in Appendix \ref{app:expr}.

\begin{figure}[]
	\centering
	\includegraphics[width=0.7\linewidth]{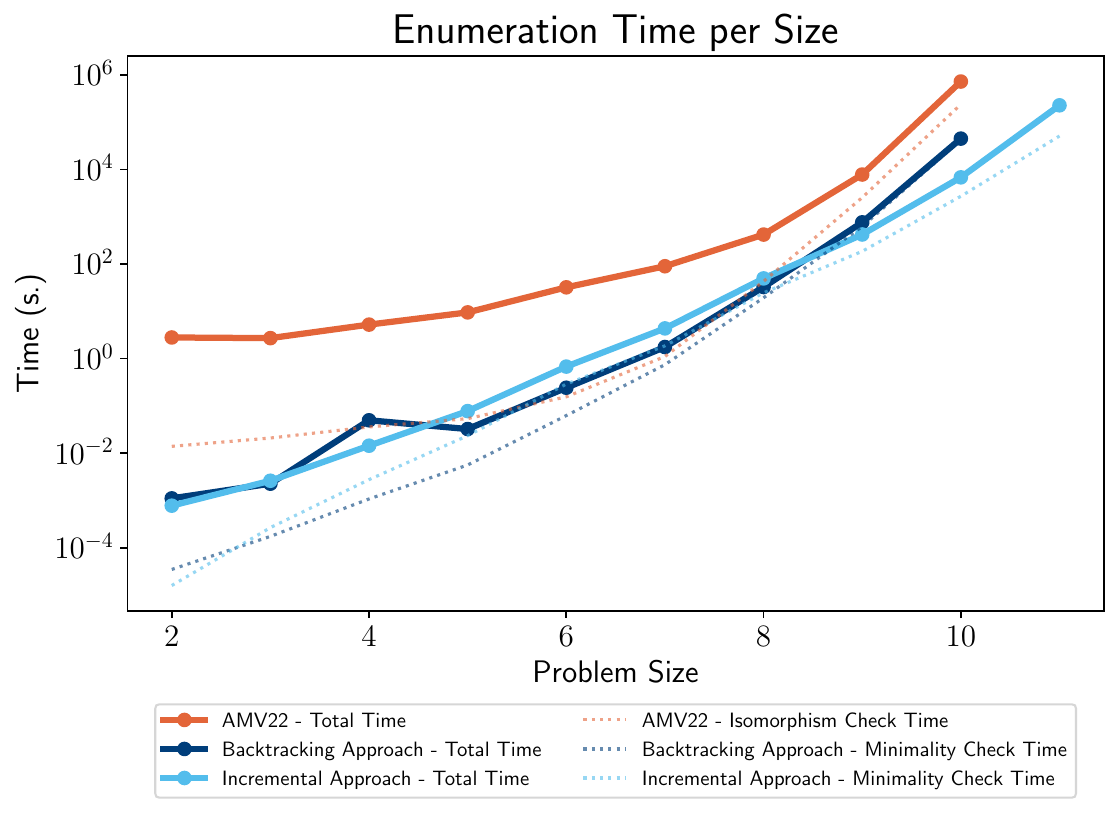}
	\caption{Comparing the runtimes of the implementation of AMV22 and our approaches building on SAT Modulo Symmetries.}
	\label{fig:gapvssms}
\end{figure}

\subsection{Comparing the SMS Approaches}
Previously, no database containing the solutions for the YBE over a set $\set$ with cardinality $|\set|=11$ was known.
Using the backtracking approach discussed here, we were able to enumerate all solutions for all diagonals of this problem except the diagonal that equals the identity.
However, with the incremental approach, we were able to improve on these results, and as such we have constructed a complete database of solutions over a set $\set$ with cardinality $|\set|=11$.

In order to understand why the backtracking approach was not able to enumerate all solutions, we study the time usage of the minimality check in more detail.
We break down the time usage of the minimality check into four categories:
\begin{itemize}
	\item checking partial cycle sets where a clause was added (i.e., a witness or refinement was found),
	\item checking partial cycle sets where no clause was added,
	\item checking fully defined cycle sets where a clause was added (i.e., a witness was found) and,
	\item checking fully defined cycle sets where no clause was added (i.e., the cycle set is lexicographically
	minimal).
\end{itemize}
In \cref{tab:mcperdiag_bt}, we observe that as size increases, the backtracking approach spends a larger amount of time verifying whether lexicographically minimal cycle sets are indeed minimal.
To understand why this is the case, note that in these cases all possible symmetries of the problem need to be considered.
In the very worst case, when enumerating cycle sets of size $n$ with the diagonal fixed to the identity, where all rows equal the identity as well; all $n!$ permutations need to be considered.
This explains why the speed-up diminishes as the problem grows and why we were unable to expand the results beyond size 10.
In \cref{tab:mcperdiag_incr}, we observe that the incremental approach does not have these issues and as a result, it was able to enumerate all solutions for size 11.
Note that with this approach, the identity diagonal is no longer the bottleneck.
Instead, for size 11, we see that enumerating the diagonal $(12)(34)$ took longer than enumerating the identity diagonal.
There are several reasons why this might be the case, one possible reason is that there are so many solutions to exclude (i.e., $\pm 11\,800\,000$ compared to $\pm 1\,000\,000$ original clauses) that the SAT-solver is severely slowed down by the solution-excluding constraints.
We hope that further experiments will give us more insights into why this is the case. 

\begin{table}[t]
	\centering
	\resizebox{\textwidth}{!}{
	\begin{tabular}{lll|lllllll|l|}
	\cline{4-11}
											  &             &              		& \multicolumn{8}{c|}{\textbf{Backtracking Approach}}                                                                                                                                                                                                                                                                                                                                                                                                                                                                   \\ \cline{2-11} 
	\multicolumn{1}{l|}{}                     & Diag.       & \# Sols.     		& \multicolumn{1}{l|}{\begin{tabular}[c]{@{}l@{}}Time\\ (\% of total time)\end{tabular}} & \begin{tabular}[c]{@{}l@{}}Number of \\ partial checks\end{tabular}				& \begin{tabular}[c]{@{}l@{}}Partial check,\\ Non-minimal \\ (\% of Time)\end{tabular} & \multicolumn{1}{l|}{\begin{tabular}[c]{@{}l@{}}Partial check,\\ No conclusion\\ (\% of Time)\end{tabular}} & \begin{tabular}[c]{@{}l@{}}Number of \\complete checks\end{tabular}	& \begin{tabular}[c]{@{}l@{}}Complete check,\\ Non-minimal\\ (\% of Time)\end{tabular} 	& \multicolumn{1}{l|}{\begin{tabular}[c]{@{}l@{}}Complete check,\\ Minimal\\ (\% of Time)\end{tabular}} 	& \begin{tabular}[c]{@{}l@{}}Mininimality Check\\ (\% of Time)\end{tabular} 	\\ \hline
	\multicolumn{1}{|l|}{\multirow{3}{*}{8}}  & \textbf{id} & 2\,041         	& \multicolumn{1}{l|}{\textbf{15.45s. (47.33)}}                                          &	6147																								& 0.78                                                                         		& \multicolumn{1}{l|}{7.20}                                                                           		&  2815																	& 5.66                                                                         			& \multicolumn{1}{l|}{76.25}                                                               					& \textbf{89.90}                                 								                         			\\
\multicolumn{1}{|l|}{}                    & (12)        & 4\,988         	& \multicolumn{1}{l|}{2.87s. (8.78)}                                                     	 &	2953																								& 1.65                                                                         		& \multicolumn{1}{l|}{9.26}                                                                           			&  5322  																& 1.51                                                                       			& \multicolumn{1}{l|}{45.39}                                                                        		& 57.81                                 									                               	\\
\multicolumn{1}{|l|}{}                    & (12)(34)    & 7\,030         	& \multicolumn{1}{l|}{2.48s. (7.59)}                                                     	 &	3538																								& 1.76                                                                         		& \multicolumn{1}{l|}{9.65}                                                                           			&  7375  																	& 0.93                                                                       			& \multicolumn{1}{l|}{27.82}                                                                        		& 40.17                                 									                              	\\ \hline
\multicolumn{1}{|l|}{\multirow{3}{*}{9}}  & \textbf{id} & 15\,534        	& \multicolumn{1}{l|}{\textbf{514.63s. (67.67)}}                                         	 &	43957																								& 0.09                                                                         		& \multicolumn{1}{l|}{1.87}                                                                           			&  18413 																	& 2.78                                                                       			& \multicolumn{1}{l|}{92.41}                                                              					& \textbf{97.15}                                  							                    				\\
\multicolumn{1}{|l|}{}                    & (12)        & 41\,732        	& \multicolumn{1}{l|}{68.82s. (9.05)}                                                    	 &	19328																								& 0.35                                                                         		& \multicolumn{1}{l|}{4.54}                                                                           			&  43008 																	& 1.13                                                                       			& \multicolumn{1}{l|}{75.90}                                                                        		& 81.92                                 									                             		\\
\multicolumn{1}{|l|}{}                    & (12)(34)    & 61\,438        	& \multicolumn{1}{l|}{37.69s. (4.96)}                                                    	 &	23178																								& 0.58                                                                         		& \multicolumn{1}{l|}{9.62}                                                                           			&  62538 																	& 0.46                                                                       			& \multicolumn{1}{l|}{46.78}                                                                        		& 57.44                                 									                            		\\ \hline
\multicolumn{1}{|l|}{\multirow{3}{*}{10}} & \textbf{id} & 150\,957       	& \multicolumn{1}{l|}{\textbf{35\,396.79s. (79.02)}}                                     	 &	371265																								& 0.01                                                                         		& \multicolumn{1}{l|}{0.28}                                                                           			&  163126																	& 1.13                                                                       			& \multicolumn{1}{l|}{98.03}                                                               					& \textbf{99.45}                                 							                   					\\
\multicolumn{1}{|l|}{}                    & (12)        & 474\,153       	& \multicolumn{1}{l|}{3\,998.35s. (8.93)}                                                	 &	184837																								& 0.04                                                                         		& \multicolumn{1}{l|}{1.18}                                                                           			&  480790																	& 0.58                                                                       			& \multicolumn{1}{l|}{92.08}                                                                        		& 93.88                                 									                           			\\
\multicolumn{1}{|l|}{}                    & (12)(34)    & 807\,084       	& \multicolumn{1}{l|}{1\,380.82s. (3.08)}                                                	 &	251264																								& 0.12                                                                         		& \multicolumn{1}{l|}{4.57}                                                                           			&  814658																	& 0.31                                                                       			& \multicolumn{1}{l|}{63.98}                                                                        		& 68.97                                 									                          			\\ \hline
\multicolumn{1}{|l|}{\multirow{3}{*}{11}} & \textbf{id} & 1\,876\,002  		& \multicolumn{1}{l|}{}                                                                  	 &																									&                                                                              			& \multicolumn{1}{l|}{    }                                                                           			&        																	&                                                                            			& \multicolumn{1}{l|}{}                                                                             		&                                       									                   			\\
\multicolumn{1}{|l|}{}                    & (12)        & 6\,563\,873  		& \multicolumn{1}{l|}{364\,026.91s.}                                                       	 &	2175972																								& 0.00                                                                         		& \multicolumn{1}{l|}{0.21}                                                                          			&  6593100      																	& 0.19                                                                       			& \multicolumn{1}{l|}{96.61}                                                                        		& 97.02                                 									                       		\\
\multicolumn{1}{|l|}{}                    & (12)(34)    & 11\,807\,217 		& \multicolumn{1}{l|}{102\,863.75s.}                                              		 	 &	3131880																								& 0.01                                                                         		& \multicolumn{1}{l|}{1.27}                                                                           			&  11839127      																	& 0.11                                                                       			& \multicolumn{1}{l|}{67.30}                                                                        		& 68.72                                 									                 				\\ \hline \\
	\end{tabular}
	}
	\caption{A breakdown of the time-usage of the backtracking approach for the diagonals $id, (12)$ and $(12)(34)$ of sizes 8,9,10 and 11. 
	Note that we were unable to enumerate the identity diagonal for size 11 using this approach.} 
	\label{tab:mcperdiag_bt}
	\end{table}

\begin{table}[t]
\centering
\resizebox{\textwidth}{!}{
\begin{tabular}{lll|llllllll|}
\cline{4-11}
                                          &             &              		& \multicolumn{8}{c|}{\textbf{Incremental Approach}}                                                                                                                                                                                                                                                                                                                                                                                                                                                                    \\ \cline{2-11} 
\multicolumn{1}{l|}{}                     & Diag.       & \# Sols.     		& \multicolumn{1}{l|}{\begin{tabular}[c]{@{}l@{}}Time\\ (\% of total time)\end{tabular}} & \begin{tabular}[c]{@{}l@{}}Number of\\partial checks\end{tabular}	& \begin{tabular}[c]{@{}l@{}}Partial check,\\ Non-minimal \\ (\% of Time)\end{tabular} 			& \multicolumn{1}{l|}{\begin{tabular}[c]{@{}l@{}}Partial check,\\ No conclusion\\ (\% of Time)\end{tabular}} 	&\begin{tabular}[c]{@{}l@{}}Number of\\ complete checks\end{tabular}											& \begin{tabular}[c]{@{}l@{}}Complete check,\\ Non-minimal\\ (\% of Time)\end{tabular} 	& \multicolumn{1}{l|}{\begin{tabular}[c]{@{}l@{}}Complete check,\\ Minimal\\ (\% of Time)\end{tabular}} & \begin{tabular}[c]{@{}l@{}}Mininimality Check\\ (\% of Time)\end{tabular} \\ \hline
\multicolumn{1}{|l|}{\multirow{3}{*}{8}}  & \textbf{id} & 2\,041         	& \multicolumn{1}{l|}{\textbf{14.53s. (29.27)}}                                          &	1097																					& 0.58                                                                      & \multicolumn{1}{l|}{2.67}                                                                           			&	4582							& 15.05                                                                        			& \multicolumn{1}{l|}{40.44}                                                               				& \textbf{58.73}                                                            \\
\multicolumn{1}{|l|}{}                    & (12)        & 4\,988         	& \multicolumn{1}{l|}{4.08s. (8.21)}                                                     &	221																					& 0.27                                                                         	& \multicolumn{1}{l|}{2.03}                                                                           			&	6518							& 11.70                                                                        			& \multicolumn{1}{l|}{43.23}                                                                        	& 57.22                                                                     \\
\multicolumn{1}{|l|}{}                    & (12)(34)    & 7\,030         	& \multicolumn{1}{l|}{4.52s. (9.10)}                                                     &	250																					& 0.35                                                                       	& \multicolumn{1}{l|}{1.54}                                                                           		&	8596							& 9.17                                                                        			& \multicolumn{1}{l|}{43.04}                                                                        	& 54.09                                                                     \\ \hline
\multicolumn{1}{|l|}{\multirow{3}{*}{9}}  & \textbf{id} & 15\,534        	& \multicolumn{1}{l|}{\textbf{135.86s. (32.25)}}                                         &	7898																					& 0.53                                                                      & \multicolumn{1}{l|}{4.01}                                                                           		&	25164							& 10.95                                                                      			& \multicolumn{1}{l|}{42.71}                                                               				& \textbf{58.19}                                                            \\
\multicolumn{1}{|l|}{}                    & (12)        & 41\,732        	& \multicolumn{1}{l|}{37.68s. (8.95)}                                                    &	1639																					& 0.17                                                                      & \multicolumn{1}{l|}{2.17}                                                                           			&	48124							& 8.21                                                                      		   	& \multicolumn{1}{l|}{41.75}                                                                        	& 52.31                                                                     \\
\multicolumn{1}{|l|}{}                    & (12)(34)    & 61\,438        	& \multicolumn{1}{l|}{42.00s. (9.97)}                                                    &	1612																					& 0.20                                                                      & \multicolumn{1}{l|}{1.84}                                                                           		&	67463							& 5.79                                                                         			& \multicolumn{1}{l|}{40.22}                                                                        	& 48.05                                                                     \\ \hline
\multicolumn{1}{|l|}{\multirow{3}{*}{10}} & \textbf{id} & 150\,957       	& \multicolumn{1}{l|}{\textbf{1\,073.65s. (15.80)}}                                      &	47416																					& 0.34                                                                      & \multicolumn{1}{l|}{5.15}                                                                           		&	189756							& 7.92                                                                         			& \multicolumn{1}{l|}{50.25}                                                               				& \textbf{63.66}                                                            \\
\multicolumn{1}{|l|}{}                    & (12)        & 474\,153       	& \multicolumn{1}{l|}{605.32s. (8.91)}                                                   &	15056																					& 0.10                                                                      & \multicolumn{1}{l|}{2.78}                                                                           		&	504172							& 4.03                                                                         			& \multicolumn{1}{l|}{40.61}                                                                        	& 47.52                                                                     \\
\multicolumn{1}{|l|}{}                    & (12)(34)    & 807\,084       	& \multicolumn{1}{l|}{817.65s. (12.03)}                                                  &	15174																					& 0.06                                                                      & \multicolumn{1}{l|}{1.86}                                                                           			&	837097							& 2.39                                                                         			& \multicolumn{1}{l|}{36.38}                                                                        	& 40.70                                                                     \\ \hline
\multicolumn{1}{|l|}{\multirow{3}{*}{11}} & \textbf{id} & 1\,876\,002  		& \multicolumn{1}{l|}{16\,153.55s. (7.14)}                                               &	435489																					& 0.23                                                                      & \multicolumn{1}{l|}{5.10}                                                                           			& 2041121								& 4.04                                                                        			& \multicolumn{1}{l|}{43.28}                                                                        	& \textbf{52.65}                                                                     \\
\multicolumn{1}{|l|}{}                    & (12)        & 6\,563\,873  		& \multicolumn{1}{l|}{17\,578.69s. (7.76)}                                               &	137194																					& 0.03                                                                      & \multicolumn{1}{l|}{1.31}                                                                           			& 6692410								& 1.03                                                                         			& \multicolumn{1}{l|}{26.40}                                                                        	& 28.77                                                                     \\
\multicolumn{1}{|l|}{}                    & (12)(34)    & 11\,807\,217 		& \multicolumn{1}{l|}{\textbf{39\,061.15s. (17.25)}}                                     &	170158																					& 0.01                                                                      & \multicolumn{1}{l|}{0.71}                                                                           			& 11917937								& 0.33                                                                         			& \multicolumn{1}{l|}{17.71}                                                                        	& 18.76                                                                     \\ \hline \\
\end{tabular}
}
\caption{A breakdown of the time-usage of the incremental approach for the diagonals $id, (12)$ and $(12)(34)$ of sizes 8,9,10 and 11.}
\label{tab:mcperdiag_incr}
\end{table}

\section{Conclusions and Future Work}
\label{sec:concl}
In this paper, we showed how to apply the SAT Modulo Symmetries framework to the generation of non-degenerate, involutive, set-theoretic solutions to the Yang-Baxter equation. 
Our methods outperform the state-of-the-art on large instances by two orders of magnitude, and we have extended known results to include non-degenerate cycle sets of size 11.

Databases of solutions up to size 10 have inspired a vast body of research in math research.
For instance, a recent survey \cite{zbMATH07919707} contains several mathematical conjectures that were inspired by mining the database of solutions up to size 10 for interesting properties (see for instance Problems 57 and 61).
The list of solutions for size 11 will be useful for similar purposes and provides a substantial number of decomposable or multipermutation solutions (both of which are very important in the combinatorial theory of the Yang-Baxter equation). 
This increased data gives us the opportunity to better understand how solutions can be constructed from smaller components.

One important question that might remain is why one should trust our implementation, except for the fact that up to size 10 our results coincide with what is known so far. 
In combinatorial optimization, \emph{proof logging}, which is the idea that solvers should not just output a solution (or in this case, a set of solutions), but also a \emph{machine-checkable proof} that their answer is indeed correct is gaining popularity. 
The SAT solver \texttt{CaDiCaL} that underlies SMS supports proof logging.
However, this is a very weak guarantee as it only allows checking that the SAT solver did not make any mistakes, and does not provide any guarantees whatsoever on the encoding or on the correctness of the custom propagator. 
The most promising approach to achieve proof logging for SMS appears to be using the VeriPB proof system, which was recently used to certify static symmetry breaking \cite{BGMN23CertifiedDominanceSymmetryBreakingCombinatorialOptimisation}. However, there are many challenges on the road ahead to achieve true trustworthy isomorphism-free generation. 
First of all, the isomorphisms are symmetries of the original problem, but not necessarily of the encoding used. 
Secondly, while VeriPB could be used to certify that symmetry breaking does not remove all solutions, it cannot be used (unless the set of symmetries it is allowed to use as a witness is somehow forced to be precisely the set of symmetries of the original problem) to certify that \emph{at least one} representative of each isomorphic class is preserved.
Finally, it cannot certify that the performed symmetry breaking was \emph{complete}, \ie that no duplicate solutions were enumerated. 
Providing true proof logging for isomorphism-free generation appears to be a major challenge. 

In the future, the methods we used can be extended to the construction of other
combinatorial structures similar to those considered here. This includes racks
and quandles, which are used in topology to construct invariants of knots;
arbitrary solutions (e.g., non-involutive or with relaxed degeneracy
conditions); and objects that appear in algebraic logic, especially L-algebras.

\begin{credits}
	\subsubsection{\ackname} This work was partially supported by Fonds Wetenschappelijk Onderzoek -- Vlaanderen (projects G070521N and G004124N) and by the project OZR3762 of Vrije Universiteit Brussel.
	The work was also partially funded by the European Union (ERC, CertiFOX, 101122653). Views and opinions expressed are however those of the author(s) only and do not necessarily reflect those of the European Union or the European Research Council. Neither the European Union nor the granting authority can be held responsible for them.
	
	\subsubsection{\discintname}
	The authors have no competing interests to declare that are
	relevant to the content of this article. 
	\end{credits}

\bibliographystyle{splncs04}
\bibliography{bb_refs_nourl.bib,dvc_refs.bib}

\newpage
\begin{subappendices}
\renewcommand{\thesection}{\arabic{section}}

\section{Encoding the cycle set properties}
\label{app:propCNF}
First, we encode the following properties:
\begin{enumerate}
 \item \label{prop:bij} for all $x,y,z\in\set$ with $y\neq z$, $\ybemat_{x,y}\neq\ybemat_{x,z}$ (i.e., the map $\leftmult_x$ is bijective for all $x\in\set$),
 \item \label{prop:cycloid} for all $x,y,z\in\set$, $\ybemat_{\ybemat_{x,y},\ybemat_{x,z}}=\ybemat_{\ybemat_{y,x},\ybemat_{y,z}}$ (i.e., the cycloid equation holds) and
 \item \label{prop:nondeg} for all $x,y\in\set$ with $x\neq y$, $\ybemat_{x,x}\neq\ybemat_{y,y}$ (i.e., the cycle set is non-degenerate).
\end{enumerate}
Properties \ref{prop:bij} and \ref{prop:nondeg} can straightforwardly be encoded using clauses that express that each number should occur on each row and on the diagonal: 
\begin{align}
	&\texttt{ExactlyOne}(\{ \matvar_{i,j,k}\mid j\in\set \}), &&\text{(for each $i,k\in \set$)}\\
	&\texttt{ExactlyOne}(\{ \matvar_{i,i,k}\mid i\in\set \}). &&\text{(for each $k\in \set$)}
\end{align}

The cycloid equation, property \ref{prop:cycloid}, needs some more care.
We achieve this using the clauses
\begin{align}
	& \olnot \matvar_{i,j,x}\lor\olnot \matvar_{i,k,y}\lor\olnot \matvar_{x,y,b}\lor\yvar_{i,j,k,b},&& \text{for all $i,j,k,x,y,b\in \set$ where $i<j$}\\
	& \olnot \matvar_{j,i,x}\lor\olnot \matvar_{j,k,y}\lor\olnot \matvar_{x,y,b}\lor\yvar_{i,j,k,b},&& \text{for all $i,j,k,x,y,b\in \set$ where $i<j$}\\
	&\texttt{ExactlyOne}(\{\yvar_{i,j,k,b}\mid b\in \set\}).&&\text{for all $i,j,k\in \set$ where $i<j$}
\end{align}
The first clause expresses that the variable $\yvar_{i,j,k,b}$ must be true whenever $\ybemat_{\ybemat_{i,j},\ybemat_{i,k}} = b$, the second does the same for the right-hand side of the cycloid equation. 
The totalizer used at the end enforces that this variable can only be true for one $b$, and as such the two sides of the equation must be the same.

\section{Proofs}
\label{app:proofs}
\begin{proposition}[\cref{prop:prop}, restated]
	Let $\pybemat$ and $\pybemat'$ be two partial cycle sets. The following properties hold.
\begin{enumerate}
 \item \label{prop:fst} If $\pybemat\lneqP\pybemat'$, then for all  $\ybemat\in\xcycsets{\pybemat}$ and $\ybemat'\in\xcycsets{\pybemat'}$, it holds that $\ybemat\lneqC\ybemat'$.
 \item \label{prop:snd} If $\pybemat\leqP\pybemat'$, then for all  $\ybemat\in\xcycsets{\pybemat}$ and $\ybemat'\in\xcycsets{\pybemat'}$, it holds that $\ybemat\leqC\ybemat'$. 
 \item \label{prop:thrd} If $\pybemat\leqP\pybemat'$ and $\pybemat'\leqP\pybemat$, then $\pybemat=\pybemat'$ and both cycle sets are complete.
\end{enumerate}
\end{proposition}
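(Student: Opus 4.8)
The plan is to reduce all three parts to elementary inequalities between the $\min$ and $\max$ of the cell domains, using throughout that ``$\ybemat$ extends $\pybemat$'' means $\ybemat_c\in\pybemat_c$ for every cell $c$. For Part~1, assume $\pybemat\lneqP\pybemat'$ with witnessing cell $c$, so that $\max\pybemat_{c'}\le\min\pybemat'_{c'}$ for every $c'<c$ and $\max\pybemat_c<\min\pybemat'_c$. Fix any $\ybemat\in\xcycsets{\pybemat}$ and $\ybemat'\in\xcycsets{\pybemat'}$. Then $\ybemat_{c'}\le\max\pybemat_{c'}\le\min\pybemat'_{c'}\le\ybemat'_{c'}$ for all $c'<c$, and $\ybemat_c\le\max\pybemat_c<\min\pybemat'_c\le\ybemat'_c$; in particular $\ybemat\neq\ybemat'$, so taking $d$ to be the first cell on which they differ we get $d\le c$ and $\ybemat_d\le\ybemat'_d$, hence $\ybemat_d<\ybemat'_d$, i.e.\ $\ybemat\lneqC\ybemat'$. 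Part~2 then follows by splitting on which disjunct of $\pybemat\leqP\pybemat'$ holds: the strict disjunct is handled by Part~1, and under the non-strict disjunct ($\max\pybemat_c\le\min\pybemat'_c$ for every cell $c$) we get $\ybemat_c\le\ybemat'_c$ everywhere, so again the first differing cell, if any, gives $\ybemat\lneqC\ybemat'$, whence $\ybemat\leqC\ybemat'$ in all cases.

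Part~3 is the crux and must be argued \emph{directly} from the definitions, since $\xcycsets{\pybemat}$ may be empty and then Parts~1--2 are vacuous. The key step I would establish first is that the strict relation $\pybemat\lneqP\pybemat'$ is incompatible with $\pybemat'\leqP\pybemat$. Suppose both hold, with $c$ the cell witnessing $\pybemat\lneqP\pybemat'$. If $\pybemat'\leqP\pybemat$ holds via its non-strict disjunct, then specializing ``$\max\pybemat'_{c''}\le\min\pybemat_{c''}$ for all $c''$'' to $c''=c$ gives the chain $\max\pybemat'_c\le\min\pybemat_c\le\max\pybemat_c<\min\pybemat'_c\le\max\pybemat'_c$, a contradiction. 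If instead $\pybemat'\lneqP\pybemat$ with witnessing cell $d$, I would do a three-way case split: for $d<c$, combine $\max\pybemat_d\le\min\pybemat'_d$ (first witness) with $\max\pybemat'_d<\min\pybemat_d$ (second witness); for $d=c$, use the chain just displayed; for $d>c$, combine $\max\pybemat'_c\le\min\pybemat_c$ (second witness) with $\max\pybemat_c<\min\pybemat'_c$ (first witness). Each case closes with a cyclic chain of $\le$'s containing one strict step. By symmetry, $\pybemat'\lneqP\pybemat$ is incompatible with $\pybemat\leqP\pybemat'$.

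Given both $\pybemat\leqP\pybemat'$ and $\pybemat'\leqP\pybemat$, the previous step forces both to hold via their non-strict disjuncts, i.e.\ $\max\pybemat_c\le\min\pybemat'_c$ and $\max\pybemat'_c\le\min\pybemat_c$ for every cell $c$. Chaining, $\min\pybemat_c\le\max\pybemat_c\le\min\pybemat'_c\le\max\pybemat'_c\le\min\pybemat_c$, so all four coincide; in particular $\min\pybemat_c=\max\pybemat_c$, so the domain $\pybemat_c$ is a singleton, and likewise $\pybemat'_c$, with the same element, giving $\pybemat_c=\pybemat'_c$. Since this holds for all cells, $\pybemat=\pybemat'$ and every entry is defined, i.e.\ both cycle sets are complete. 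The main obstacle is purely the bookkeeping in the middle paragraph: ruling out the strict disjuncts needs a comparison of the two witnessing cells together with the correct chain of $\le$/$<$ relations between the relevant $\min$'s and $\max$'s in each case. Everything else is a routine unfolding of the definitions of $\leqP$, $\lneqP$, and of extension.
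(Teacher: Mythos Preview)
Your proof is correct and follows the same $\min$/$\max$ chaining as the paper. In fact, for Part~3 you are more careful than the paper: the paper's proof simply asserts that $\pybemat\leqP\pybemat'$ and $\pybemat'\leqP\pybemat$ imply $\max\pybemat_c\leq\min\pybemat'_c$ and $\max\pybemat'_c\leq\min\pybemat_c$ for \emph{all} cells $c$, which tacitly assumes both relations hold via their non-strict disjuncts. Your middle paragraph explicitly rules out the strict disjuncts by the three-way case split on the witnessing cells, closing a gap the paper glosses over; the remaining chain $\min\pybemat_c\le\max\pybemat_c\le\min\pybemat'_c\le\max\pybemat'_c\le\min\pybemat_c$ is then identical in both arguments.
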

\begin{proof}
	Given two partial cycle sets $\pybemat,\pybemat'\in\pcycsets$, for which $\pybemat\leqP\pybemat'$, we show that Property \ref{prop:snd} holds.
	If $\pybemat\leqP\pybemat'$, for all cells $c\in\set\times\set$, where $\set=\{1,\ldots,n\}$, it holds that $\pybemat_c\leqP\pybemat'_c$, or equivalently that $\max{\pybemat_c}\leq\min{\pybemat'_c}$.
	Hence, for all cells $c$ of all extended cycle sets $\ybemat\in\xcycsets{\pybemat}$ and $\ybemat'\in\xcycsets{\pybemat'}$, it holds that $\ybemat_c$ is smaller than or equal to $\ybemat'_c$, or equivalently that $\ybemat\leqC\ybemat'$.
	
	If $\pybemat\lneqP\pybemat'$, there exists a cell $c$ for which $\pybemat_c\lneqP\pybemat'_c$ and for all cells $c'<c$, $\pybemat_{c'}\leqP\pybemat'_{c'}$.
	Hence, for $c'<c$, we have that $\max{\pybemat_{c'}}\leq\min{\pybemat'_{c'}}$, and hence that $\ybemat_{c'}\leq\ybemat'_{c'}$.
	However, $\max{\pybemat_{c}}$ is strictly smaller than $\min{\pybemat_{c'}}$, so $\ybemat_{c}$ will be strictly smaller than $\ybemat'_{c}$ for all extended cycle sets.
	As a result, we have that $\ybemat\lneqC\ybemat'$, and we have shown that Property \ref{prop:fst} holds.
	
	Last, we show that Property \ref{prop:thrd} holds as well.
	If $\pybemat\leqP\pybemat'$ and $\pybemat'\leqP\pybemat$, then for all cells $c$, we have that $\max{\pybemat_c}\leq\min{\pybemat'_c}$ and $\max{\pybemat'_c}\leq\min{\pybemat_c}$.
	Using that minimum values are smaller than or equal to maximum values gives us that \[\max{\pybemat_c}\leq\min{\pybemat'_c}\leq\max{\pybemat'_c}\leq\min{\pybemat_c}.\] 
	However, this implies that $\max{\pybemat_c}\leq\min{\pybemat_c}$, but we know that $\min{\pybemat_c}\leq\max{\pybemat_c}$.
	So necessarily, $\max{\pybemat_c}=\min{\pybemat_c}$, and hence, \[\max{\pybemat_c}=\min{\pybemat_c}=\max{\pybemat'_c}=\min{\pybemat'_c}.\]
	In other words, all cells in $\pybemat$ and $\pybemat'$ are fixed such that $\pybemat=\pybemat'$, which gives us Property \ref{prop:thrd}.
\end{proof}

\begin{theorem}[\cref{thm:main}, restated]
	Given a partial cycle set $\pybemat$ and permutation $\perm$, we have that:
	\begin{itemize}
	 \item if $\perm(\pybemat)\lneqP\pybemat$, then $\perm$ is a witness of non-minimality, 
	 \item If $\perm(\pybemat)\leqP_c\pybemat$ then
	 there is an extension $\pybemat'$ of $\pybemat$ such that
	 \begin{enumerate}
		 \item $\pybemat'$ and $\perm(\pybemat')$ are fully defined on all $c'\leq c$, and  
		 \item all $\leqC$-minimal extensions $\ybemat\in\xcycsets{\pybemat}$ are also extensions of $\pybemat'$.
	 \end{enumerate}
	\end{itemize}
   \end{theorem}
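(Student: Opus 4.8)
The plan is to deduce both parts from \cref{prop:prop} together with one structural fact: since $\perm$ is an isomorphism of the problem (it respects the fixed diagonal), applying $\perm$ commutes with the extension relation, so $\ybemat\in\xcycsets{\pybemat}$ implies $\perm(\ybemat)\in\xcycsets{\perm(\pybemat)}$. Indeed, $\ybemat_{c'}\subseteq\pybemat_{c'}$ for every cell $c'$ gives $\perm(\ybemat)_{c'}=\invperm(\ybemat_{\perm(c')})\subseteq\invperm(\pybemat_{\perm(c')})=\perm(\pybemat)_{c'}$, and $\perm(\ybemat)$ is again a complete cycle set because relabelling preserves the cycle-set constraints, and preserves the fixed diagonal since $\perm$ centralizes it.

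For the first bullet, assume $\perm(\pybemat)\lneqP\pybemat$ and take any complete extension $\ybemat\in\xcycsets{\pybemat}$. Then $\perm(\ybemat)\in\xcycsets{\perm(\pybemat)}$ by the fact above, so \cref{prop:prop}(\ref{prop:fst}) --- applied with $\perm(\ybemat)$ as the extension of $\perm(\pybemat)$ and $\ybemat$ as the extension of $\pybemat$ --- gives $\perm(\ybemat)\lneqC\ybemat$. Since $\ybemat$ was arbitrary, $\perm(\ybemat)\prec\ybemat$ for every complete extension, i.e., $\perm$ is a witness of non-minimality.

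For the second bullet, the core is to determine the entries of every $\leqC$-minimal extension on the cells up to $c$ and on their $\perm$-images. Fix a complete $\ybemat\in\xcycsets{\pybemat}$; as before $\perm(\ybemat)\in\xcycsets{\perm(\pybemat)}$, so for each cell $c'\leq c$ we have $\perm(\ybemat)_{c'}\in\perm(\pybemat)_{c'}$ and hence $\perm(\ybemat)_{c'}\leq\max\perm(\pybemat)_{c'}\leq\min\pybemat_{c'}\leq\ybemat_{c'}$, the middle inequality being the hypothesis $\perm(\pybemat)\leqP_c\pybemat$. If $\ybemat$ is moreover $\leqC$-minimal among its isomorphic copies, then $\ybemat\leqC\perm(\ybemat)$; comparing this with the pointwise bound $\perm(\ybemat)_{c'}\leq\ybemat_{c'}$ for $c'\leq c$ forces the first cell on which $\ybemat$ and $\perm(\ybemat)$ disagree (if any) to lie strictly after $c$, whence $\ybemat_{c'}=\perm(\ybemat)_{c'}$ for all $c'\leq c$. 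Substituting back yields $\ybemat_{c'}=\min\pybemat_{c'}$ and, using $\perm(\ybemat)_{c'}=\invperm(\ybemat_{\perm(c')})$, also $\ybemat_{\perm(c')}=\perm(\min\pybemat_{c'})$. Thus all $\leqC$-minimal extensions of $\pybemat$ agree on the cell set $D:=\{c':c'\leq c\}\cup\{\perm(c'):c'\leq c\}$, and their common value in each such cell lies in the corresponding domain of $\pybemat$.

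Finally, I would obtain $\pybemat'$ from $\pybemat$ by collapsing each cell of $D$ to the value that every $\leqC$-minimal extension takes there (well-defined by the previous paragraph) while keeping $\pybemat'_{c'}=\pybemat_{c'}$ for $c'\notin D$; if $\xcycsets{\pybemat}$ has no $\leqC$-minimal element the second requirement is vacuous and one may instead take $\pybemat'$ to be any complete extension of $\pybemat$, or, when $\xcycsets{\pybemat}=\emptyset$, collapse the cells of $D$ arbitrarily to singleton subsets of their domains in $\pybemat$. This $\pybemat'$ extends $\pybemat$ and has non-empty domains; it is fully defined on all $c'\leq c$ since $\{c':c'\leq c\}\subseteq D$; and $\perm(\pybemat')$ is fully defined on all $c'\leq c$ since $\perm(\pybemat')_{c'}=\invperm(\pybemat'_{\perm(c')})$ with $\perm(c')\in D$ a singleton cell; moreover every $\leqC$-minimal $\ybemat\in\xcycsets{\pybemat}$ agrees with $\pybemat'$ on $D$ and lies inside $\pybemat=\pybemat'$ off $D$, so it extends $\pybemat'$. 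The step I expect to be most delicate is the forcing argument --- showing that $\ybemat\leqC\perm(\ybemat)$ together with the pointwise bound genuinely pins down every relevant entry --- together with the fiddly verification that $\pybemat'$ is a legitimate partial cycle set: the two prescriptions of a forced value (through a cell $c'\leq c$ and through $\perm(c')$) must never clash, which is automatic once they are all read off a single reference minimal extension $\ybemat_0$, and the degenerate cases where $\pybemat$ has no $\leqC$-minimal completion need separate, but routine, handling.
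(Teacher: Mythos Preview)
Your proof is correct. For the first bullet you proceed exactly as the paper does, invoking \cref{prop:prop}(\ref{prop:fst}) after noting that $\perm(\ybemat)\in\xcycsets{\perm(\pybemat)}$; in fact you make this commutation step explicit where the paper leaves it implicit. For the second bullet the two arguments diverge in style: the paper reasons \emph{iteratively and by contrapositive}, walking through the cells $c'\leq c$ and observing that any assignment other than $\pybemat_{c'}\mapsto\min\pybemat_{c'}$ and $\perm(\pybemat)_{c'}\mapsto\max\perm(\pybemat)_{c'}$ would make $\perm$ a witness of non-minimality at $c'$ (reducing to the first bullet). You instead argue \emph{globally and directly}: fixing a $\leqC$-minimal extension $\ybemat$, you combine $\ybemat\leqC\perm(\ybemat)$ with the pointwise bound $\perm(\ybemat)_{c'}\leq\ybemat_{c'}$ to force equality on all $c'\leq c$, and then read off the values. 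Your route is slightly longer but buys more: it makes the consistency of the two prescriptions on overlapping cells transparent (both are read from a single $\ybemat_0$), and it explicitly disposes of the degenerate cases where $\xcycsets{\pybemat}$ is empty or contains no $\leqC$-minimal element, which the paper's sketch does not address.
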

\begin{proof}
Let $\pybemat\in\pcycsets$ be a partial cycle set, and $\perm$ a permutation such that $\perm(\pybemat)\lneqP\pybemat$.
Using Property \ref{prop:fst} from the previous proposition, we obtain that $\perm(\ybemat)\lneqC\ybemat$ for all extended cycle sets $\ybemat\in\xcycsets{\pybemat}$.
As a result we have shown that $\perm$ is indeed a witness of non-minimality.
 
We show that the second case holds as well.
Let $\pybemat$ be a partial cycle set that is not fully defined and $\perm$ a permutation such that $\perm(\pybemat)\leqP_c\pybemat$.
For the first cell $c'\leq c$, where either, or both $\pybemat_{c'}$ and $\perm(\pybemat)_{c'}$ are not fully defined, the only way to avoid that $\perm(\pybemat)_{c'} \lneqP \pybemat_{c'}$ (in which case this would be a witness of non-minimality, following the first point) is assigning $\perm(\pybemat)_{c'}$ its maximal value and $\pybemat_{c'}$ its minimal value. We can repeat this for all cells up to $c$. 
\end{proof}

\section{Encoding the minimality check}
\label{app:mincheckCNF}
First, we describe the clauses used to ensure that the permutation $\pi$ is indeed well-defined and an isomorphism of the given cycle set.
In order to ensure that $\pi$ is well-defined, we add the following clauses:
\begin{align}
	&\texttt{ExactlyOne}(\{p_{i,j}\mid j\in\set \}), &&\text{(for each $i\in \set$)}\\
	&\texttt{ExactlyOne}(\{p_{i,j}\mid i\in\set \}). &&\text{(for each $j\in \set$)}.
\end{align}
Next, we ensure that $\pi$ is indeed an isomorphism of the problem.
Similar to the propagation step during the backtracking approach, no extra measures need to be taken if the diagonal equals the identity.
In all other cases, we need to ensure that $\pi$ fixes the diagonal $T$, or in other words that $\perm\in\centr_{\symmgroup_{|\set|}}(T)$.
The permutation $\perm$ fixes the diagonal if it maps each cycle of the diagonal onto a same-length cycle, in the same order.
Hence, if $\perm(i)=j$, the encoding ensures that all values $k$ in the same cycle as $i$ are fixed accordingly.
Using $n(k)$ to denote the successor of $k$ in its cycle and $C_n$ to denote the set containing all elements in cycles of length $n$, we add the following clauses to ensure that the entire cycle is fixed if $\perm(i)=j$; for all $n\in\set$ and all $i,j\in C_n$:
\begin{align}
 &\lnot p_{i,j} \lor p_{n(i),n(j)}.
\end{align}
	
To ensure that $\mat'$ is the image of $\mat$ under the given permutation $\pi$, we add a set of clauses relating the two matrices and the permutation.
We define the following clauses that ensure $\mat'$ equals $\pi(\mat)$; 
for all $i,i',j,j',k,k'\in\set$:
\begin{align}
	&\lnot p_{i,i'}\lor\lnot p_{j,j'}\lor\lnot p_{k,k'}\lor\lnot \omatvar_{i',j',k'}\lor \omatvar'_{i,j,k},\\
	&\lnot p_{i,i'}\lor\lnot p_{j,j'}\lor\lnot p_{k,k'}\lor\omatvar_{i',j',k'}\lor\lnot \omatvar'_{i,j,k}
\end{align}
This encoding can be simplified using the available domain-specific knowledge. 
The permutation $\pi$ is an isomorphism of the problem if it fixes the diagonal, hence, the image of $i\in\set$ under $\pi$ should equal an element $j\in\set$ that belongs to a cycle with the same length.
As a result, $\pi_{i,j}$ is false if $i$ and $j$ belong to cycles with different lengths, so this variable is no longer introduced and clauses containing it can be simplified (or simply removed).

\newcommand\plusplus{{+}{+}}
To ensure that $\mat'<\mat$, we use the order $\leq_{\omatvar'}$ and auxiliary variables $n_{c,i}$ defined in \cref{ss:MCE} to add static symmetry breaking constraints.
We use $(c,k)_{\plusplus}$ to denote the tuple $(c',k')$, where $\omatvar'_{c,k}$ immediately precedes $\omatvar'_{c',k'}$ in the order $\leq_{\omatvar'}$,
and add the following clauses for all cells $c$ and for each $k\in \set$ (except for the last inequality): 
\begin{align}
    &\nbvar_{c,k}\Rightarrow \omatvar_{(c,k)_{\plusplus}} \lor\lnot \omatvar'_{(c,k)_{\plusplus}}\\
    &\nbvar_{c,k}\land \omatvar'_{(c,k)_{\plusplus}}\Rightarrow \nbvar_{(c,k)_{\plusplus}}\\
    &\nbvar_{c,k}\land \lnot \omatvar_{(c,k)_{\plusplus}}\Rightarrow \nbvar_{(c,k)_{\plusplus}}.
\end{align}
The first clause forces $\mat'$ to be smaller than or equal to $\mat$, the last two clauses ensure that $\nbvar_{c,k}$ is only true if all preceding $\omatvar^{(')}$-variables have the same truth value.
In order to ensure that $\mat'$ is strictly smaller than $\mat$, we ensure that the last inequality is strict:
\begin{align}
    &\nbvar_{c_l,2}\Rightarrow \omatvar_{c_l,1} \lor\lnot \omatvar'_{c_l,1}\\
    &\lnot \nbvar_{c_l,2}\lor \lnot \omatvar'_{c_l,1}\\
    &\lnot \nbvar_{c_l,2}\lor \omatvar_{c_l,1},
\end{align}
where $c_l$ is the last cell.

In order to reason about partial cycle sets, a different order is used.
We no longer verify whether $\mat'<\mat$, but whether $\mat'\lneqP \mat$, or in other words, whether there exists a cell $c$ such that $\max{\mat'_c}<\min{\mat_c}$ and $\mat'_{c'}=\mat_{c'}$ for all previous cells $c'$.
As a result, the symmetry breaking constraints need to be adapted to reason about the maximum and minimum values of the cycle sets.
First, we encode the $l$- and $g$-variables by adding the following clauses to the formula for all $i,j,k\in\set$:
\begin{align}
	&g_{i,j,k}\Leftrightarrow \bigwedge_{l\in\set,l\leq k}\lnot \omatvar_{i,j,k}\\
	&l_{i,j,k}\Leftrightarrow \bigwedge_{l\in\set,l\geq k}\lnot \omatvar'_{i,j,k}.
\end{align}
Next, we add the following clauses for all cells $c$ and values $k\in\set$:
\begin{align}
    &\nbvar'_{c,k}\Rightarrow g_{c,k-1}\lor l_{c,k}\\
    &\nbvar'_{c,k}\land \lnot \omatvar_{(c,k)_{\plusplus}}\Rightarrow \nbvar'_{(c,k)_{\plusplus}}\\
    &\nbvar'_{c,k}\land \lnot l_{(c,k)_{\plusplus}}\Rightarrow \nbvar'_{(c,k)_{\plusplus}}.
\end{align}
In order to ensure that $\mat'\lneqP\mat$, we again ensure that the last inequality is strict:
\begin{align}
    &\nbvar'_{c_l,2}\Rightarrow g_{c_l,1}\lor l_{c_l,2}\\
    &\lnot \nbvar'_{c_l,2}\lor \omatvar_{c_l,1}\\
    &\lnot \nbvar'_{c_l,2}\lor l_{c_l,1},
\end{align}
where $c_l$ is the last cell.

\section{Configurations}
\label{app:conf}
We have used preliminary experiments enumerating sizes 8, 9 and 10 in order to identify good overall parameters for both \texttt{YBE-SMS} approaches.
For the backtracking approach, this eventually resulted in a situation where almost all the time is spent in unavoidable minimality checks (of complete, minimal solutions), and hence further parameter tweaking cannot result in significant speedups. 
For the incremental approach, the minimality check is no longer dominating the cost, so further optimization of these parameters (or other ideas for speeding up the search of the main solver) could be beneficial.
Both approaches use the following optimizations discussed throughout the paper:
\begin{itemize}
 \item they partition the problem by fixing the diagonals,
 \item they optimize the encoding(s) by using the information gained by fixing a diagonal,
 \item they encode \texttt{ExactlyOne} constraints using a binary encoding.
\end{itemize}
Next, we identified parameters that find a suitable balance between the frequency of and the limitations we impose on the minimality check:
\begin{itemize}
 \item Backtracking approach:
 \begin{itemize}
  \item Frequency: $\frac{1}{50}$
  \item Limit: 200 visited nodes in the search tree
 \end{itemize}
 \item Incremental approach:
 \begin{itemize}
  \item Frequency: $\frac{1}{100}$
  \item Limit: 10 conflicts
 \end{itemize}
\end{itemize}

\section{Experiments}
\label{app:expr}
In \cref{tab:gapvsssms} we give a more detailed breakdown of the results presented in \cref{fig:gapvssms}.
Here, we compare the new \texttt{YBE-SMS} implementations against the implementation of Akg\"un, Mereb and Vendramin~\cite{AMV22Enumerationset-theoreticsolutionsYang-Baxterequation} (referred to as AMV22).
\begin{table}[]
    \centering
    \resizebox{0.9\textwidth}{!}{
    \begin{tabular}{ll|ll|lll|lll|l}
    \cline{3-10}
                              &              & \multicolumn{2}{c|}{\textbf{AMV22}} & \multicolumn{3}{c|}{\textbf{Backtracking Approach}} & \multicolumn{3}{c|}{\textbf{Incr. SAT Approach}}   &  \\ \cline{1-10}
    \multicolumn{1}{|l}{Size} & \# Sols      & Iso Check (s.)         & Total (s.)         & MinCheck (s.)  & Total (s.)  & Speedup              & MinCheck (s.) & Total (s.)  & Speedup              &  \\ \cline{1-10}
    \multicolumn{1}{|l}{2}    & 2            & 0.0                    & 2.8                & 0.0            & \textbf{0.0}         &  & 0.0           & \textbf{0.0}         &  &  \\
    \multicolumn{1}{|l}{3}    & 5            & 0.0                    & 2.7                & 0.0            & \textbf{0.0}         &  & 0.0           & \textbf{0.0}         &  &  \\
    \multicolumn{1}{|l}{4}    & 23           & 0.0                    & 5.2                & 0.0            & \textbf{0.0}         &  & 0.0           & \textbf{0.0}         &  &  \\
    \multicolumn{1}{|l}{5}    & 88           & 0.0                    & 9.5                & 0.0            & \textbf{0.0}         &  & 0.0           & \textbf{0.0}         &  &  \\
    \multicolumn{1}{|l}{6}    & 595          & 0.2                    & 32.2               & 0.1            & \textbf{0.2}         & 161.0                     & 0.3           & 0.7         & 46.0                 &  \\
    \multicolumn{1}{|l}{7}    & 3\,456       & 1.1                    & 89.8               & 0.7            & \textbf{1.8}         & 49.9                      & 1.9           & 4.4         & 20.4                 &  \\
    \multicolumn{1}{|l}{8}    & 34\,530      & 43.1                   & 419.3              & 19.3           & \textbf{32.6}        & 12.9                      & 24.6          & 49.7        & 8.4                 &  \\
    \multicolumn{1}{|l}{9}    & 321\,931     & 2\,542.3               & 7\,797.7           & 621.6          & 760.5       & 10.2                      & 185.6         & \textbf{421.2}       & 18.51                 &  \\
    \multicolumn{1}{|l}{10}   & 4\,895\,272  & 237\,307.1             & 720\,883.0         & 41\,594.1      & 44\,792.5   & 16.1                      & 2\,706.3      & \textbf{6796.8}      & 108.1                 &  \\
    \multicolumn{1}{|l}{11}   & 77\,182\,093 &                        &                    &                &             &                           & 50\,767.2     & \textbf{226\,395.6} &  &  \\ \cline{1-10} \\
    \end{tabular} 
    }
    \caption{Comparing the runtimes of the implementation of AMV22 and our approaches building on SAT Modulo Symmetries.}
	\label{tab:gapvsssms}
    \end{table}
\end{subappendices}

\end{document}

\typeout{get arXiv to do 4 passes: Label(s) may have changed. Rerun}